\documentclass[sglanonrev]{neutralarticle}
\RequirePackage{tgtermes}
\RequirePackage{newtxtext}
\RequirePackage{newtxmath}
\RequirePackage{bm}
\RequirePackage{endnotes}

\OneAndAHalfSpacedXII 

\usepackage{braket}
\usepackage{booktabs}

\usepackage{algorithm}
\usepackage{algpseudocode}
\usepackage{tikz}

\usepackage{natbib}
 \bibpunct[, ]{(}{)}{,}{a}{}{,}%
\EquationsNumberedThrough    

\TheoremsNumberedThrough     
\ECRepeatTheorems  %

\MANUSCRIPTNO{}

\definecolor{myblack}{RGB}{0, 20, 114}
\usepackage[hidelinks,colorlinks=true,urlcolor=myblack,linkcolor=myblack,citecolor=myblack]{hyperref}
\usepackage{amsfonts}
\usepackage{amsmath} 
\allowdisplaybreaks[4]
\usepackage{graphicx} 
\usepackage{natbib}
\usepackage{algorithm}
\usepackage{booktabs}
\usepackage{algpseudocode}
\usepackage{amssymb}
\usepackage{array}

\def\EMAIL#1{\href{mailto:#1}{#1}}

\def\theARTICLETOP{}

\makeatletter
\def\theARTICLEABSTRACT{%
  \HOOKb
  \vspace*{18pt}
  \hspace*{2pt}\begin{minipage}[t]{\textwidth}\parindent1em
    \ABSfont
    \noindent\theABSTRACT\endgraf
    \vskip5pt
    \theFUNDING
    \theKEYWORDS
    \theSUBJECTCLASS
    \theAREAOFREVIEW
    \theMSCCLASS
    \theORMSCLASS
    \if@BLINDREV\else\theHISTORY\fi
    \noindent\hrulefill
  \end{minipage}
  \vspace*{0pt}
}
\makeatother
\begin{document}

\RUNAUTHOR{Hu and Hu}

\RUNTITLE{Rate-Optimal Quantum Discrete Simulation Optimization}

\TITLE{Rate-Optimal Quantum Discrete Simulation Optimization}

\ARTICLEAUTHORS{%
\AUTHOR{Mingjie Hu}
\AFF{School of Management, Fudan University, \EMAIL{23110690009@m.fudan.edu.cn}
}

\AUTHOR{Jian-Qiang Hu}
\AFF{School of Management, Fudan University, \EMAIL{hujq@fudan.edu.cn}}

}

\ABSTRACT{%
Quantum computing has developed rapidly in recent years and has shown potential advantages over classical computing in various domains. In this paper, we study whether quantum computing can accelerate the convergence rate of discrete simulation optimization in the fixed-budget setting. We first formulate the quantum discrete simulation optimization problem by defining the quantum simulation oracle, the algorithmic performance measure, and the corresponding optimality guarantee. Based on this formulation, we establish a minimax lower bound on the expected optimality gap for any quantum algorithm. The lower bound analysis reduces the discrete simulation optimization problem to a sequential quantum phase-testing problem, and the resulting lower-bound argument may be of independent interest. By comparing the quantum lower bound with its classical counterpart, we show that the optimal convergence rate in the quantum setting exhibits a quadratic improvement in its dependence on the simulation budget, at the cost of a worse dependence on the number of decisions. We then propose the first fixed-budget quantum algorithm, called \textsc{QUEST}, for discrete simulation optimization. The algorithm combines a quantum Monte Carlo subroutine with a sequential elimination framework and uses a slow-measurement mechanism to help preserve quantum superposition during the optimization process. We prove that \textsc{QUEST} achieves the optimal convergence rate up to logarithmic factors. Numerical experiments further show that \textsc{QUEST} outperforms its classical counterpart, which provides empirical evidence for the quantum advantage in improving the convergence rate of discrete simulation optimization algorithms.
}%

\KEYWORDS{Quantum computing, Simulation optimization, Rate optimality, Expected optimality gap, Ranking and selection} 

\maketitle


\section{Introduction}\label{sec:Intro}
Quantum computing has attracted rapidly growing attention in recent years, driven by significant advances in quantum hardware, algorithms, and theoretical foundations. In particular, quantum advantage has been theoretically demonstrated in several application domains, including simulation \citep{blanchet2025quantum,zabinsky2025comprehensive}, reinforcement learning \citep{meyer2022survey}, and combinatorial optimization \citep{abbas2024challenges}. 

Previous studies have shown that quantum computing can provide a quadratic speedup for Monte Carlo estimation in various simulation settings \citep{montanaro2015quantum,blanchet2025quantum}. This raises a natural question of whether similar improvement can be achieved in simulation optimization. However, the role of quantum  computing in simulation optimization remains much less understood, with limited results on both its fundamental performance limits and algorithm design. This motivates the central questions of this paper: \begin{center}
\textit{How does quantum computing affect the convergence rate of discrete simulation optimization, and can we design a quantum algorithm that is provably rate optimal?}
\end{center}

To answer these questions, we focus on discrete simulation optimization, a widely studied problem class also known as ranking and selection (R\&S) in the simulation literature. In this setting, a finite set of candidate decisions is given, and the performance of each decision is represented by a random simulation output. The goal is to recommend a decision with the largest mean performance using a limited simulation budget. Under the fixed-budget setting, we measure the quality of a recommendation by its expected optimality gap, defined as the expected difference between the mean performance of the optimal decision and that of the recommended decision. Thus, the central objective is to minimize this expected optimality gap under a given number of simulation budget.

However, addressing these theoretical and algorithmic questions is challenging for several reasons:
\begin{itemize}
    \item \textbf{Quantum Information Extraction.} The advantage of quantum computing stems from its ability to process information in superposition. However, extracting classical information requires measuring the quantum state, which collapses the superposition and prevents its further use. This restriction on information extraction makes classical sequential simulation optimization algorithms unsuitable in the quantum setting and calls for a different algorithmic structure. 

    \item \textbf{Adaptive Algorithm Design.} Quantum simulation optimization algorithms rely on repeated queries to simulation oracles to estimate the performance of candidate decisions. To use these queries efficiently, the algorithm should adaptively eliminate clearly suboptimal decisions and concentrate the remaining budget on more competitive ones. The main challenge is to design an adaptive procedure that balances sequential elimination with the limited opportunities for information extraction imposed by quantum measurement.

    \item \textbf{Performance Limit Analysis.} Performance limits for classical algorithms are often established through change-of-measure arguments from sequential hypothesis testing \citep{lattimore2020bandit}. These arguments do not directly extend to the quantum setting. In particular, because the number of oracle queries made by an adaptive quantum algorithm is random, new tools from quantum hypothesis testing are needed both to characterize the fundamental performance limit and to establish performance guarantees for the proposed algorithm.

\end{itemize}

To aid readers, we provide an intuitive explanation of the key quantum computing concepts used in this paper and clarify their connection to classical discrete simulation optimization. In the quantum setting, a simulation system is represented by a quantum simulation oracle, which serves as the analogue of a classical simulation model. Given an input decision, a classical simulator produces a random observation, while a quantum simulation oracle prepares a quantum sample: a superposition state that coherently encodes all possible simulation outcomes. Measuring this state yields an observable sample, but also collapses the superposition and destroys the quantum parallelism that enables acceleration. Under this analogy, the number of classical simulation budget corresponds to the number of queries to the quantum simulation oracle. Therefore, a key principle in quantum algorithm design is to preserve superposition as much as possible and measure only when necessary, which allows the algorithm to exploit quantum coherence and achieve faster convergence rate.

\subsection{Main Contributions}
The main contributions of this paper are summarized as follows:
\begin{itemize}
    \item We first formulate quantum discrete simulation optimization in the fixed-budget setting and introduce the quantum simulation oracle as the basic building block for quantum algorithm design. We then define the minimax expected optimality gap and the corresponding notion of minimax rate optimality for evaluating the performance of quantum simulation optimization algorithms.
    \item We reduce the discrete simulation optimization to a sequential quantum phase-testing problem and derive a lower bound on the testing error. This result provides the key step in characterizing the optimal convergence rate in the quantum setting. The sequential phase-testing argument may also be of independent interest and could be useful for studying other performance measures in quantum simulation optimization and related problems.
    \item We establish minimax lower bounds for the expected optimality gap in both quantum and classical settings. Specifically, we show that the quantum minimax lower bound is \(\Omega\left({|\mathcal X|}/{T}\right)\), while the corresponding classical lower bound is \(\Omega(\sqrt{{|\mathcal X|}/{T}})\), where $|\mathcal X|$ denotes the number of decisions and $T$ the simulation budget. The comparison reveals a quadratic improvement in the dependence of the fundamental performance limit on the simulation budget, from \(T^{-1/2}\) in the classical setting to \(T^{-1}\) in the quantum setting. At the same time, the dependence on the number of decisions becomes worse, changing from \(\sqrt{|\mathcal X|}\) to \(|\mathcal X|\).
    We provide an intuitive explanation for this trade-off.
    \item We propose a quantum discrete simulation optimization algorithm, called \textsc{QUEST}. The algorithm follows a sequential elimination framework, using quantum Monte Carlo estimation to evaluate candidate decisions and progressively eliminates decisions with inferior estimated performance. To adapt sequential elimination to the quantum setting, we introduce a slow-measurement mechanism that helps preserve quantum coherence and retain the potential quantum speedup. We prove that \textsc{QUEST} achieves an expected optimality gap of order \(O((|\mathcal X|\log T)/T)\). By comparing this upper bound with the quantum minimax lower bound \(\Omega(|\mathcal X|/T)\), we show that \textsc{QUEST} is minimax rate-optimal up to logarithmic factors.
    \item Finally, we conduct numerical experiments to illustrate the quantum advantage. We implement \textsc{QUEST} using Qiskit-based quantum simulation and compare its performance with that of a classical counterpart. The results show that, under the same simulation budget, \textsc{QUEST} achieves a smaller expected optimality gap. This empirical observation is consistent with our theoretical convergence-rate analysis.
\end{itemize}

These results show that quantum computing can improve the convergence rate of simulation optimization and that this improvement is attainable by a provably near-optimal algorithm. More broadly, our framework opens new directions for studying the role of quantum computing in simulation-based decision-making.

\subsection{Related Literature}
Discrete simulation optimization, also known as R\&S, is a widely studied class of simulation optimization problems. Two standard formulations are the fixed-budget and fixed-confidence settings. In the fixed-budget setting, \citet{chen2000simulation} proposed the optimal computing budget allocation (OCBA) procedure, a foundational approach for allocating a limited simulation budget to identify the decision with the best mean performance. Subsequent work has extended this framework to problems with spatial structure \citep{hu2025optimal}, stochastic constraints \citep{lee2012approximate,hu2024multi}, and input uncertainty \citep{wang2025optimal}.
In the fixed-confidence setting, \citet{kim2001fully} proposed the KN procedure, a fully sequential method that guarantees the probability of correct selection to exceed a prescribed confidence level. Related fixed-confidence formulations have since been studied in settings with contextual information \citep{shen2021ranking} and data-driven problem formulations \citep{wu2024data}.

Recent work has begun to examine how quantum computing can be used to improve simulation optimization \citep{zabinsky2025comprehensive}. The most closely related studies are \citet{hu2025quantum} and \citet{hu2026quantum}, which studied discrete simulation optimization in the fixed-confidence setting and focused on reducing query complexity. We instead study the fixed-budget setting and characterize performance through the minimax expected optimality gap. The two formulations lead to different objectives, performance measures, and algorithmic challenges, so the theoretical analysis and algorithm design required here are substantially different from those developed for fixed-confidence problems.

Our study is also connected to the quantum search literature. Grover's algorithm achieves a quadratic speedup in query complexity for finding a marked item \citep{grover1996fast}, and the D\"urr-H{\o}yer minimum-finding algorithm extends this idea to finding the minimum of an unsorted list \citep{durr1996quantum}. Grover adaptive search and related methods further combined threshold updates with amplitude amplification to search for decisions that improve on the current objective value \citep{bulger2003implementing,baritompa2005grover,gilliam2021grover}. These methods primarily address deterministic objective functions, while our setting explicitly accounts for simulation noise. The oracle models are also different. Existing quantum search methods typically assume coherent access to an oracle acting on a superposition of candidate decisions, while our framework associates a quantum simulation oracle with each decision. Moreover, existing search-based work is usually evaluated through the query complexity required to identify an optimum. We instead study the convergence rate of the expected optimality gap under a fixed simulation budget. This performance measure is standard in simulation optimization \citep{gao2017new}, but its fundamental limit in the quantum setting remains largely unexplored.

The algorithmic development builds on quantum mean estimation, which improves the query complexity of mean estimation quadratically over classical Monte Carlo methods \citep{montanaro2015quantum,kothari2023mean}. Applying mean estimation independently to every decision, however, can spend many queries on decisions that are already clearly suboptimal. The main algorithmic question is how to allocate queries adaptively, eliminating inferior decisions while concentrating effort on the remaining competitive ones. This type of adaptive allocation is central to classical simulation optimization \citep{wu2024data}. \citet{gacon2020quantum} developed a quantum-enhanced framework for simulation-based optimization by combining quantum amplitude estimation with optimization procedures. Our focus is different since we establish a minimax lower bound on the convergence rate and design an adaptive algorithm that achieves this rate up to logarithmic factors.

Quantum acceleration has also been studied in multi-armed bandits \citep{casale2020quantum}, a closely related problem in machine learning. Most studies focus on cumulative regret \citep{wan2023quantum, dai2023quantum},  which accounts for losses incurred throughout the learning process. The expected optimality gap considered here depends only on the final recommendation. Other studies on quantum best-arm identification have adopted fixed-confidence objectives by controlling the probability of incorrect selection and reducing query complexity \citep{wang2021quantum,wang2025quantum}. In contrast, we study the convergence rate of quantum discrete simulation optimization under a fixed budget. Our lower-bound analysis also develops a sequential phase-testing tool that accommodates adaptive quantum algorithms with random query counts.

The remainder of this paper is organized as follows. Section~\ref{sec: problem} formally formulates the discrete simulation optimization problem and introduces the quantum computing preliminaries and the definition of the quantum simulation oracle. Section~\ref{sec: rate bound} defines minimax rate optimality and establishes minimax lower bounds on the expected optimality gap for both quantum and classical algorithms. Section~\ref{sec: algorithm} proposes a quantum algorithm for discrete simulation optimization and proves its rate optimality up to logarithmic factors. Section~\ref{sec: numerical} reports the numerical results. Section~\ref{sec: conclusion} concludes the paper and discusses future research directions. All technical proofs are provided in the appendix.

\section{Quantum Discrete Simulation Optimization}
\label{sec: problem}
In this section, we first introduce the mathematical formulation of discrete simulation optimization. We then present the quantum computing preliminaries used in this paper; for a comprehensive treatment, see \citet{nielsen2010quantum}. Finally, we define the quantum simulation oracle, which serves as a fundamental building block for designing quantum algorithms.

\subsection{Discrete Simulation Optimization}
We consider a complex stochastic system whose performance depends on a discrete decision variable $x$ selected from a feasible set $\mathcal X$. The system is characterized by three components: a decision variable $x\in \mathcal X$; a random object $\xi_x$, defined on a support set $\Xi$ with distribution $P(\xi_x)$, which summarizes all random quantities and processes associated with the system under decision $x$; and a deterministic function $Y:\mathcal{X}\times \Xi\rightarrow\mathbb{R}$, which maps the decision and a realization of the randomness to the corresponding system performance. In particular, $Y$ captures the operational logic of the stochastic system and does not admit a closed-form expression. We define the objective value of decision $x$ as its mean performance:
\begin{equation*}
    y(x) := \mathbb{E}[Y(x,\xi_x)].
\end{equation*}
The objective value $y(x)$ generally has no closed-form representation and can only be estimated from simulation outputs. Let \(x^\star\) denote an optimal decision, i.e., any decision satisfying
\begin{equation*}
    y(x^\star) = \max_{x\in\mathcal X} y(x).
\end{equation*}

Let \(\mathcal A\) denote a discrete simulation optimization algorithm, and let
\(\nu=\{P_x:x\in\mathcal X\}\) denote a given problem instance, where \(P_x\) is the distribution of the simulation output \(Y(x,\xi_x)\) under decision \(x\). Definition~\ref{def: expected optgap} introduces the expected optimality gap, which we use to measure the performance of algorithm \(\mathcal A\) under instance \(\nu\).

\begin{definition}[Expected Optimality Gap]
\label{def: expected optgap}
    For a given simulation budget \(T\), the expected optimality gap of an algorithm \(\mathcal A\) under problem instance \(\nu\) is defined as
    \begin{equation}
\label{eq: opt gap}
    \mathcal{E}_T(\mathcal A,\nu)
    :=
    \mathbb{E}_{\mathcal A,\nu}
    \left[
        y(x^\star)-y(\hat{x}_T)
    \right],
\end{equation}
where \(\hat{x}_T\) is the decision recommended by algorithm \(\mathcal A\) under simulation budget \(T\).
\end{definition}
\begin{remark}
The expectation in \eqref{eq: opt gap} is taken over the simulation randomness and any randomization used by the algorithm. The expected optimality gap is a standard performance measure in the literature. It is also referred to as the expected opportunity cost in the R\&S literature \citep{gao2017new} and as simple regret in the multi-armed bandit literature \citep{lattimore2020bandit}.
\end{remark}

To facilitate the exposition and theoretical analysis, we impose the following assumption on the simulation outputs.

\begin{assumption}
\label{ass:performance}
For every \(x\in\mathcal X\), the centered simulation output
\(Y(x,\xi_x)-y(x)\) is \(\sigma\)-sub-Gaussian; that is, for all
\(\lambda\in\mathbb R\),
\begin{equation*}
    \mathbb E\left[
        \exp\left(\lambda\big(Y(x,\xi_x)-y(x)\big)\right)
    \right]
    \le
    \exp\left(\frac{\lambda^2\sigma^2}{2}\right),
\end{equation*}
and $\sigma\ge1/2$ is a fixed constant independent of $T$ and $|\mathcal X|$.
Moreover, the mean performance satisfies \(y(x)\in[0,1]\) for all
\(x\in\mathcal X\).
\end{assumption}

\begin{remark}
The sub-Gaussian condition is standard and generalizes the Gaussian noise
assumption commonly adopted in the classical simulation optimization
literature \citep{chen2000simulation}. The assumption \(y(x)\in[0,1]\) is a normalization made without loss of generality for analysis. Whenever the mean performances are uniformly bounded, one can apply an affine rescaling to the simulation outputs so that the normalized mean performances lie in \([0,1]\).
\end{remark}

\subsection{Preliminaries of Quantum Computing}
Quantum computers exploit quantum phenomena such as superposition, entanglement, and interference to process information. These principles enable quantum parallelism, in which unitary operations act coherently on an exponential superposition of classical basis states. Intuitively, quantum parallelism can be viewed as evaluating a function at multiple input values simultaneously. For certain computational tasks, quantum computers can achieve exponential speedups over classical computers. In this subsection, we introduce the key concepts and notation used throughout the paper. For a comprehensive treatment, see \citep{nielsen2010quantum}.

In quantum computing, Dirac's bra-ket notation is used to represent vectors in complex Hilbert spaces. A ket
$
|\psi\rangle = (\psi_1,\dots,\psi_n)^\top \in \mathbb{C}^n
$
denotes a column vector, and its corresponding bra
$
\langle \psi| = (\psi_1^*,\dots,\psi_n^*)
$
is the conjugate transpose of the ket. For any two states
\(|\psi\rangle,|\phi\rangle \in \mathbb{C}^n\), their inner product is defined as
\[
\langle \psi|\phi\rangle
:=
\sum_{i=1}^n \psi_i^* \phi_i
\in \mathbb{C},
\]
which induces the \(\ell_2\)-norm
\[
\big\| |\psi\rangle \big\|_2
=
\sqrt{\langle \psi|\psi\rangle}.
\]
The tensor product of two quantum states
\(|\psi\rangle \in \mathbb{C}^{n_1}\) and
\(|\phi\rangle \in \mathbb{C}^{n_2}\) is defined as
\[
|\psi\rangle|\phi\rangle
:=
|\psi\rangle \otimes |\phi\rangle
=
(\psi_1\phi_1,\psi_1\phi_2,\dots,\psi_1\phi_{n_2},
\psi_2\phi_1,\dots,\psi_{n_1}\phi_{n_2})^\top
\in
\mathbb{C}^{n_1}\otimes\mathbb{C}^{n_2},
\]
and represents the joint state of the composite system.

The basic unit of information in quantum computing is a qubit. Unlike a classical bit, which can take one of two values, \(0\) or \(1\), a qubit can exist in a linear combination of the two basis states, known as a quantum superposition. Mathematically, a single-qubit state can be written as
$
    \ket{\psi} = \alpha \ket{0} + \beta \ket{1},
$
where \(\alpha,\beta \in \mathbb{C}\) are complex amplitudes satisfying the normalization condition
$
    |\alpha|^2 + |\beta|^2 = 1.
$
Upon measurement in the computational basis, the qubit collapses to \(\ket{0}\) with probability \(|\alpha|^2\) or to \(\ket{1}\) with probability \(|\beta|^2\). 

\subsection{Quantum Simulation Oracle}
To leverage quantum computing for accelerating simulation optimization, we introduce a quantum simulation oracle for each decision as the basic building block for quantum algorithm design. To make the construction clear, we first formulate the classical simulation oracle and then extend this notion to the quantum setting.

In classical simulation optimization, a simulation model is used to estimate the expected system performance of each decision. Given an input \(x\in\mathcal X\), one simulation replication generates a random observation \(Y(x,\xi_x)\). Thus, the simulation model can be viewed as a stochastic oracle that maps a decision \(x\) to a realized performance outcome \(Y(x,\xi_x)\). We refer to this oracle as the classical simulation oracle.

We define the quantum simulation oracle analogously. To estimate system performance, we construct a quantum simulation model. Intuitively, querying this model with an input \(x\in\mathcal X\) produces a quantum sample of \(Y(x,\xi_x)\). Unlike a classical simulation output, however, this sample is encoded in a superposition state and cannot be directly observed before measurement.

Definition~\ref{def: quantum sim oracle} gives the formal definition of the quantum simulation oracle for each decision $x\in \mathcal X$. We use a simulation noise register \(\xi\) and an output register \(Y\), both initialized in the zero state.

\begin{definition}[Quantum Simulation Oracle]
\label{def: quantum sim oracle}
For each decision $x\in\mathcal X$, a quantum simulation oracle is a unitary operator \(\mathcal Q_x\) such that
\begin{equation}
\label{eq: quantum oracle}
\mathcal Q_x
\ket{0}_{\xi}\ket{0}_{Y}
=
    \sum_{\xi_x\in \Xi}
    \sqrt{p_x(\xi_x)}
    \ket{\xi_x}_{\xi}
    \ket{Y(x,\xi_x)}_{Y},
\end{equation}
where \(\Xi\) is the support of the simulation noise \(\xi_x\), and \(p_x(\xi_x)\) is its probability mass function. The induced distribution of $Y(x,\xi_x)$ is $P_x$.
\end{definition}
\begin{remark}
The quantum simulation oracle in \eqref{eq: quantum oracle} is stated for discrete random variables. 
For continuous random variables, one first discretizes the support into finitely many grid points, so that the distribution can be encoded on a finite number of qubits. 
Thus, the prepared quantum state represents a discretized approximation of the original continuous simulation model \citep{carrera2021efficient}.
\end{remark}

For the quantum simulation oracle, each query prepares one quantum sample encoded in superposition. Thus, the number of classical simulation replications is naturally compared with the number of queries to the quantum simulation oracle. From this perspective, the quantum simulation process and the corresponding definition of the optimality gap are aligned with their classical counterparts.

The quantum simulation oracle is also implementable in principle. If a classical simulator, such as a queueing model, can be implemented as a computer program, then it can be converted into a quantum circuit with at most polynomial overhead \citep{wang2021quantum2,nielsen2010quantum}. Existing studies have constructed quantum simulation models for specific applications, including queueing systems and reinforcement learning; see, for example, \citet{esa2024single} and \citet{dunjko2016quantum}. Since this paper focuses on designing rate-optimal simulation optimization algorithms, we assume that such a quantum simulation model is available.

\section{Optimal Rate of Expected Optimality Gap}
\label{sec: rate bound}
In this section, we first formally introduce the minimax expected optimality gap, along with the notions of minimax optimality and minimax rate optimality. We then establish the optimal rates of the expected optimality gap for quantum and classical algorithms, respectively. These results characterize the potential quantum advantage in terms of convergence rates.

We introduce the standard Bachmann-Landau notation in Definition~\ref{def:landau} \citep{lattimore2020bandit}.

\begin{definition}[Landau Notation]
\label{def:landau}
Let \(f,g:\mathbb N\to[0,\infty)\) be functions with \(g(n)>0\) for all sufficiently large \(n\). We write
\begin{equation*}
    f(n)=O(g(n))
    \Longleftrightarrow
    \limsup_{n\to\infty}\frac{f(n)}{g(n)}<\infty,
\end{equation*}
and
\begin{equation*}
    f(n)=\Omega(g(n))
    \Longleftrightarrow
    \liminf_{n\to\infty}\frac{f(n)}{g(n)}>0.
\end{equation*}
\end{definition}

\begin{remark}
In this paper, \(f(n)=O(g(n))\) means that \(f(n)\) grows no faster than \(g(n)\) up to constant factors, while \(f(n)=\Omega(g(n))\) means that \(f(n)\) grows at least as fast as \(g(n)\) up to constant factors. When the upper and lower bounds match asymptotically, we say that the corresponding rate is asymptotically optimal.
\end{remark}

To evaluate the performance of a simulation optimization algorithm, it is not sufficient to consider a single problem instance. An algorithm may perform well on some instances but poorly on others. Therefore, we evaluate algorithms over a class of admissible problem instances. Let \(\mathcal M\) denote the class of admissible discrete simulation optimization problem instances satisfying Assumption~\ref{ass:performance}, i.e.,
\begin{equation*}
    \mathcal M
    =
    \left\{
        \nu=(P_x:x\in\mathcal X):
        \begin{array}{l}
        \mathbb E_{Z\sim P_x}
        \left[
            \exp\left(\lambda\big(Z-y(x)\big)\right)
        \right]
        \le
        \exp\left(\lambda^2\sigma^2/2\right),
        \quad \forall \lambda\in\mathbb R,\ \forall x\in\mathcal X, \\[0.5em]
        y(x):=\mathbb E_{Z\sim P_x}[Z]\in[0,1],
        \quad \forall x\in\mathcal X
        \end{array}
    \right\}.
\end{equation*}

To characterize the performance of an algorithm over a class of problem instances, we adopt a worst-case perspective. Specifically, for a fixed algorithm \(\mathcal A\), we evaluate its largest expected optimality gap over all instances \(\nu\in\mathcal M\). We then compare this worst-case performance with the best possible worst-case performance achievable by any admissible quantum algorithm. This leads to the following definition of the minimax expected optimality gap.

\begin{definition}[Minimax Expected Optimality Gap]
\label{def:minimax-expected-optgap}
For a given simulation budget \(T\), the minimax expected optimality gap over the instance class \(\mathcal M\) is defined as
\begin{equation}
\label{eq:minimax-gap}
\mathcal{E}^\star_T(\mathcal M)
:=
\inf_{\mathcal A}
\sup_{\nu\in \mathcal M}
\mathcal{E}_T(\mathcal A,\nu),
\end{equation}
where the infimum is taken over all admissible quantum algorithms \(\mathcal A\).
\end{definition}
\begin{remark}
The quantity \(\mathcal{E}^\star_T(\mathcal M)\) provides a theoretical benchmark for evaluating algorithms over the instance class \(\mathcal M\). It is also of independent interest, as it characterizes the best achievable worst-case performance among all admissible quantum simulation optimization algorithms over \(\mathcal M\).
\end{remark}

Based on the minimax expected optimality gap, we define minimax optimality for a quantum simulation optimization algorithm. This notion requires that the worst-case expected optimality gap of an algorithm over the instance class \(\mathcal M\) attain the minimax benchmark \(\mathcal E^\star_T(\mathcal M)\). In other words, a minimax optimal algorithm achieves the best possible worst-case performance among all admissible quantum algorithms under the same simulation budget \(T\).

\begin{definition}[Minimax Optimality]
A quantum discrete simulation optimization algorithm \(\mathcal A\) is minimax optimal over \(\mathcal M\) under simulation budget \(T\) if
\begin{equation*}
    \sup_{\nu\in\mathcal M}
    \mathcal E_T(\mathcal A,\nu)
    =
    \mathcal E^\star_T(\mathcal M).
\end{equation*}
\end{definition}

Although \(\mathcal E_T^\star(\mathcal M)\) characterizes the fundamental worst-case performance limit over the instance class \(\mathcal M\), requiring an algorithm to attain this value exactly is often too stringent. Therefore, we consider the weaker and more practical notion of minimax rate optimality in Definition \ref{def: rate opt}. Under this notion, an algorithm is considered optimal if the minimax expected optimality gap admits a lower bound of a certain rate, and the worst-case expected optimality gap of the algorithm achieves a matching upper bound up to constant factors.

\begin{definition}[Minimax Rate Optimality]
\label{def: rate opt}
A quantum discrete simulation optimization algorithm \(\mathcal A\) is minimax rate optimal over \(\mathcal M\) if there exists a rate \(r_T\) such that the minimax expected optimality gap is lower bounded by \(r_T\) up to a constant factor, and the worst-case expected optimality gap of \(\mathcal A\) is upper bounded by \(r_T\) up to a constant factor. Formally,
\begin{equation*}
\mathcal E_T^\star(\mathcal M)
=
\Omega(r_T),
\qquad
\sup_{\nu\in\mathcal M}
\mathcal E_T(\mathcal A,\nu)
=
O(r_T).
\end{equation*}
\end{definition}

\begin{remark}
In some cases, the upper bound achieved by an algorithm matches the minimax lower bound up to logarithmic factors. Since logarithmic factors are lower order relative to the polynomial dependence on the simulation budget \(T\), we refer to such an algorithm as minimax rate optimal up to logarithmic factors. We use the notation \(\widetilde O(\cdot)\) to hide logarithmic factors. Thus, if \(\mathcal E_T^\star(\mathcal M)=\Omega(r_T)\) and
\[
    \sup_{\nu\in\mathcal M}\mathcal E_T(\mathcal A,\nu)
    =
    \widetilde O(r_T),
\]
then algorithm \(\mathcal A\) is minimax rate optimal up to logarithmic factors.
\end{remark}

Computing the exact value of \(\mathcal E_T^\star(\mathcal M)\) is generally intractable, because it requires optimizing over all admissible quantum algorithms and taking the worst case over the entire instance class \(\mathcal M\). Therefore, we focus on the notion of minimax rate optimality introduced in Definition~\ref{def: rate opt}. To derive the lower-bound part of the minimax rate analysis, we restrict attention to a Bernoulli subclass whose instances satisfy the assumptions defining \(\mathcal M\). Since this subclass is contained in \(\mathcal M\), any minimax lower bound for the subclass is also a valid minimax lower bound for the full class \(\mathcal M\).

The minimax lower-bound analysis relies on a reduction to hypothesis testing: if two problem instances are difficult to distinguish, then no algorithm can reliably identify the better decision under a limited simulation budget. In the classical setting, such arguments are typically based on change-of-measure or information-theoretic bounds for probability distributions \citep{lattimore2020bandit}. In the quantum setting, however, these classical hypothesis testing tools do not directly apply, because the information available to an algorithm is encoded in quantum states rather than classical samples. 

To establish a minimax lower bound on the expected optimality gap, a standard approach is to construct a suitable family of problem instances within $\mathcal M$ on which any algorithm must incur a nontrivial optimality gap \citep{lattimore2020bandit}. Following this approach,
 we construct Bernoulli instances in which the simulation output under decision $x$ satisfies
\[
    \mathbb P(Y(x)=1)=y(x),
    \qquad
    \mathbb P(Y(x)=0)=1-y(x).
\]
This is a special case of Definition~\ref{def: quantum sim oracle} with
$\Xi=\{0,1\}$, $Y(x,\xi)=\xi$, $p_x(1)=y(x)$, and $p_x(0)=1-y(x)$. One quantum simulation oracle
query outputs
\[
    \mathcal Q_x\ket{0}_{\xi}\ket{0}_{Y}
    =
        \sqrt{1-y(x)}\ket{0}_{\xi}\ket{0}_{Y}
        +
        \sqrt{y(x)}\ket{1}_{\xi}\ket{1}_{Y}.
\]
Since the decision register is fixed and the output map $Y$ is deterministic,
we identify the induced Bernoulli amplitude state as
\[
    \ket{\psi(x)}
    =
    \sqrt{1-y(x)}\ket{0}
    +
    \sqrt{y(x)}\ket{1}.
\]
Measuring $\ket{\psi(x)}$ in the computational basis yields outcome $1$ with
probability $y(x)$ and outcome $0$ with probability $1-y(x)$, which matches the
Bernoulli simulation model.

We represent the set of decisions as $\mathcal{X} = \{x_1,\ldots,x_{|\mathcal X|}\}$. 
For Bernoulli instances, the distribution $P_x$ is determined by the mean parameter. Therefore, we describe their output distributions by the
mean vector $\nu = (y(x_1),\ldots,y({x_{|\mathcal X|})})$. Set $\Delta_0 := 1/64$. For $\Delta\in(0,\Delta_0]$, define the original problem instance $\nu_1:= (1/4+\Delta,1/4,\ldots,1/4)$. Therefore, $x_1$ is the unique optimal decision. For each $i\in\{2,\ldots,|\mathcal X|\}$, let
\(\mathcal V_i(\Delta)\) denote the set of Bernoulli problem
instances \(\nu\) for which
\[
\nu=\left(
\frac{1}{4}+\Delta,\,
\frac{1}{4},\,
\ldots,\,
\underset{\text{$i$th coordinate}}{y(x_i)},\,
\ldots,\,
\frac{1}{4}
\right),
\qquad
\frac{1}{4}+2\Delta
\le y(x_i) \le
\frac{1}{4}+\frac{\sqrt{3}}{8}.
\]
Therefore, under $\nu\in \mathcal V_i(\Delta)$, the optimal decision is $x_i$ with an optimality gap of at least $\Delta$. 

Lemma~\ref{lemma: exp error} provides an explicit analytical lower bound on the quantum hypothesis-testing error probability under the Bernoulli instances. The bound shows that the error probability decays at most exponentially with the product of the number of queries and the mean gap \(\Delta\). This lower bound provides the key ingredient for the subsequent lower-bound analysis of the minimax expected optimality gap.

\begin{lemma}
\label{lemma: exp error}
There exists a universal constant $c>0$ such that, for every $0<\Delta\le \Delta_0$, every $i\in\{2,\ldots,|\mathcal X|\}$, and every quantum algorithm with recommended decision $\hat{x}$,
\[
\max\left\{
\mathbb{P}_{\nu_1}\!\left(\hat{x}\neq x_1\right),
\sup_{\nu\in\mathcal{V}_i(\Delta)}
\mathbb{P}_{\nu}\!\left(\hat{x}\neq x_i\right)
\right\}
\ge
\frac{1}{4}
\exp\left[
-c\Delta
\left(
\mathbb{E}_{\nu_1}[N_i]+\frac{1}{2}
\right)
\right].
\]
where $N_i$ is the random total number of queries directed to the oracle of decision $x_i$ under $\nu_1$.
\end{lemma}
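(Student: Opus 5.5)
The plan is to reduce the lemma to a \emph{sequential quantum phase-testing} problem and bound the testing error with a hybrid (Bures-angle) argument. The error is controlled by a single alternative when \(\Delta\,\mathbb E_{\nu_1}[N_i]\) is small, and by averaging over a continuum of alternatives in \(\mathcal V_i(\Delta)\) when it is large.

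\textbf{Step 1: phase representation.} Write \(\ket{\psi(x)}=\cos\theta_x\ket{0}+\sin\theta_x\ket{1}\) with \(\sin^2\theta_x=y(x)\). Every admissible algorithm can be purified into a sequence of fixed unitaries interleaved with (controlled) applications of \(\mathcal Q_{x}\), \(\mathcal Q_x^{\dagger}\). An adaptive, randomly stopped algorithm is made into a fixed-length circuit by adding a coherent ``halt'' flag, so that \(N_i\) is the number of non-idle calls to \(\mathcal Q_{x_i}\).

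Instances in \(\{\nu_1\}\cup\mathcal V_i(\Delta)\) differ only in \(\theta_{x_i}\). The range \(y(x_i)\in[1/4+2\Delta,\,1/4+\sqrt3/8]\) lies strictly inside \((0,1)\) because \(\Delta\le\Delta_0=1/64\). On this range \(d\theta/dy=1/(2\sqrt{y(1-y)})\) is bounded above and below by absolute constants. Hence the phase shift \(\delta:=\theta_{x_i}(\nu)-\theta_{x_i}(\nu_1)\) satisfies \(c_1\Delta\le\delta\) for every \(\nu\in\mathcal V_i(\Delta)\), and the admissible shifts fill an interval of constant length. The operator norm \(\|\mathcal Q^{\nu}_{x_i}-\mathcal Q^{\nu_1}_{x_i}\|\) is \(O(\delta)\), because the oracle can be taken to be a rotation by \(\theta\) on the output qubit.

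\textbf{Step 2: hybrid bound.} Let \(\ket{\Phi_{\nu}}\) be the final purified state. The standard hybrid argument gives
\[
\big\|\ket{\Phi_\nu}-\ket{\Phi_{\nu_1}}\big\|\le C\,\delta\,\mathbb E_{\nu_1}[N_i]+C\delta ,
\]
where the \(+C\delta\) term covers the stopping query; this is the source of the \(+1/2\). The sum runs over the query weights \(\sum_t\|\Pi_{t}\ket{\Phi_t}\|\), and Cauchy--Schwarz with the halt flag converts it into \(\mathbb E_{\nu_1}[N_i]\). Helstrom's bound then yields
\[
\mathbb P_{\nu_1}(\hat x\ne x_1)+\mathbb P_\nu(\hat x\ne x_i)\ge 1-\|\Phi_\nu-\Phi_{\nu_1}\|_{\rm tr}.
\]
Choosing \(\nu\) with \(y(x_i)=1/4+2\Delta\), i.e. \(\delta\asymp\Delta\), gives a maximum error of at least \(1/4\) whenever \(c\Delta(\mathbb E_{\nu_1}[N_i]+1/2)\le 1\). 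This already implies the claim in that regime.

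\textbf{Step 3: the large-\(\Delta N\) regime (main obstacle).} A single alternative cannot give an exponential lower bound here. Exact phase discrimination is possible once \(\delta N\gtrsim\pi\), so the error for one fixed alternative can vanish; this is why the statement takes a supremum over \(\mathcal V_i(\Delta)\). Since \(e^{-u}\le 1/(1+u)\), it suffices to show
\[
\max\{\cdot\}\ \ge\ \frac{1}{4\,(1+c\Delta(\mathbb E_{\nu_1}[N_i]+1/2))}.
\]
I would put a uniform prior on the phase shift over a window of width \(\asymp\Delta\) inside the admissible interval. The aim is to show that a circuit with \(\mathbb E[N_i]=n\) queries cannot output ``\(x_i\)'' on most of the window while outputting ``\(x_1\)'' at \(\delta=0\).

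The tool I would try first is the following. The amplitude of the accept outcome is a trigonometric polynomial in \(\theta\) of degree at most \(2N_i\); this is the polynomial method, and the stopping time is handled by Markov's inequality at level \(2\mathbb E[N_i]\). A Bernstein/Markov-type inequality then limits how fast this polynomial can rise from near \(0\) at \(\theta_{x_i}(\nu_1)\) to near \(1\) on the window. I expect this uniformity over a continuum of alternatives, combined with the random query count, to be the delicate point. If the polynomial route fails to handle stopping cleanly, the fallback is to chain alternatives \(\nu^{(0)}=\nu_1,\nu^{(1)},\dots\) spaced by \(\asymp 1/\mathbb E[N_i]\) and apply Step 2 between consecutive ones. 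Finally, I would adjust the universal constant \(c\) so that both regimes merge into the stated bound.
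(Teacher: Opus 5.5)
\textbf{The gap is in Step~3.} The inequality you reduce to,
\[
\max\{\cdot\}\ \ge\ \frac{1}{4\bigl(1+c\Delta(\mathbb E_{\nu_1}[N_i]+1/2)\bigr)},
\]
is false for every constant $c$, so no argument can deliver it. Because $1/(1+u)\ge e^{-u}$, it is strictly stronger than the lemma, and an explicit algorithm violates it. Run $\textsc{QMC}(\Delta/2,\tilde\delta)$ of Lemma~\ref{lemma:qae} on $x_i$ only, and recommend $x_i$ if the estimate exceeds $1/4+\Delta$ and $x_1$ otherwise. This uses a deterministic $N_i=N=O(\Delta^{-1}\log(1/\tilde\delta))$, so $\tilde\delta=\exp(-c'\Delta N)$. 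Both $\mathbb P_{\nu_1}(\hat x\neq x_1)$ and $\sup_{\nu\in\mathcal V_i(\Delta)}\mathbb P_\nu(\hat x\neq x_i)$ are then at most $\exp(-c'\Delta N)$, which falls below $1/\bigl(4(1+c\Delta(N+1/2))\bigr)$ for large $N$. So the exponential form of the lemma is tight up to the constant in the exponent, and the real task is to prove an \emph{exponentially} small lower bound.

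Neither of your tools can do this. A Bernstein/Markov derivative bound on the acceptance probability $h$ (a trigonometric polynomial of degree $D\le 2m$) only says that $h$ needs a distance of order $1/D$ to climb from $\varepsilon$ to $1-\varepsilon$. That is the constant-error regime of Step~2 and carries no $\log(1/\varepsilon)$ information. The chaining fallback fails for the same reason: additive Helstrom/hybrid bounds between neighbouring alternatives allow an order-one change per link, so they can never force a multiplicatively small error.

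\textbf{What is missing} is an inequality that uses $h\in[0,1]$ on the \emph{whole} circle together with $h\ge 1-\varepsilon$ on the arc of alternatives, which starts at distance $\asymp\Delta$ from $\theta_0$. The paper gets this from the polynomial-method phase-estimation bound of Mande and de Wolf.

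The random query count also needs more than plain Markov truncation, because discarding $\{N_i>m\}$ costs an additive $1/2$ that swamps any exponentially small bound. Instead, take $m=\lfloor 2\mathbb E_{\nu_1}[N_i]\rfloor$ and fold the overflow into $h(\theta)=\mathbb P_\theta(\hat x=x_i,\,N_i\le m)+\mathbb P_\theta(N_i>m)$. This $h$ still has degree at most $2m$ and lies in $[0,1]$. It is at most $\mathbb P_{\nu_1}(\hat x\neq x_1)+1/2$ at $\theta_0$, and at least $1-\varepsilon$ on the alternatives, where $\varepsilon:=\sup_{\nu\in\mathcal V_i(\Delta)}\mathbb P_\nu(\hat x\neq x_i)$.

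\textbf{One way to finish.} Suppose the null error is at most $1/4$. Then $q(z)=z^{D}(1-h(\theta))$ with $z=e^{i\theta}$ is a polynomial of degree $2D$ satisfying $|q|\le1$ on the circle, $|q|\le\varepsilon$ on the alternative arc, and $|q(z_0)|\ge1/4$ at $z_0=e^{i\theta_0}$.
\begin{itemize}
\item Subharmonicity of $\log|q|$ gives $\log|q((1-\eta)z_0)|\le\omega\log\varepsilon$, where the harmonic measure of the arc satisfies $\omega\gtrsim\eta/\Delta$.
\item Bernstein's inequality $|q'|\le 2D$ with $\eta=1/(16D)$ keeps $|q((1-\eta)z_0)|\ge1/8$.
\item Together these give $\varepsilon\ge\exp(-CD\Delta)\ge\exp(-4C\Delta\,\mathbb E_{\nu_1}[N_i])$ when $D\Delta\gtrsim1$; otherwise $\varepsilon$ is bounded below by a constant.
\end{itemize}

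A smaller point concerns Step~2: Cauchy--Schwarz on $\sum_t\|\Pi_t\Phi_t\|$ yields $\sqrt{T\,\mathbb E[N_i]}$, not $\mathbb E[N_i]$. The linear bound instead requires a hybrid argument over the measured classical branches, in which each swap costs $O(\Delta)$ times the $\nu_1$-probability that the swapped query goes to $x_i$.
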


Theorem~\ref{thm: quantum lower bound} establishes a minimax lower bound on the expected optimality gap. Specifically, under our quantum simulation oracle model, no algorithm using at most \(T\) oracle queries can achieve a worst-case expected optimality gap smaller than order \(|\mathcal{X}|/T\). This result characterizes a fundamental performance limit for quantum discrete simulation optimization over the instance class \(\mathcal{M}\).

\begin{theorem}
\label{thm: quantum lower bound}
Suppose $|\mathcal X|>1$ and $T\ge |\mathcal X|-1$. Then, for any quantum
discrete simulation optimization algorithm $\mathcal A_q$ under the quantum simulation oracle in Definition \ref{def: quantum sim oracle} and using at most $T$
oracle queries, there exists an instance $\nu\in\mathcal M$ such that
\[
    \mathcal E_T(\mathcal A_q,\nu)
    =
    \Omega\left(\frac{|\mathcal X|}{T}\right).
\]
Consequently,
\[
    \mathcal{E}^\star_T(\mathcal M)
    =
    \inf_{\mathcal A_q}\sup_{\nu\in\mathcal M}\mathcal E_T(\mathcal A_q,\nu)
    =
    \Omega\left(\frac{|\mathcal X|}{T}\right).
\]
\end{theorem}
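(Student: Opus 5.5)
The plan is to apply Lemma~\ref{lemma: exp error} to a single well-chosen alternative, using pigeonhole on query counts. Fix any quantum algorithm $\mathcal A_q$ that uses at most $T$ oracle queries, and run it on the base instance $\nu_1=(1/4+\Delta,1/4,\ldots,1/4)$, which is Bernoulli and hence lies in $\mathcal M$ (it is $1/2$-sub-Gaussian, with $\sigma\ge 1/2$ and means in $[0,1]$). The query counts satisfy $\sum_{i=2}^{|\mathcal X|}\mathbb E_{\nu_1}[N_i]\le T$. So there is an index $i\in\{2,\ldots,|\mathcal X|\}$ with
\[
\mathbb E_{\nu_1}[N_i]\le \frac{T}{|\mathcal X|-1}.
\]
This is the least-explored suboptimal decision, and we take it as the alternative optimum.

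We then set $\Delta=\min\{\Delta_0,\,(|\mathcal X|-1)/T\}$. Since $T\ge|\mathcal X|-1$, we have $(|\mathcal X|-1)/T\le 1$, and so $\Delta\ge \Delta_0(|\mathcal X|-1)/T$. With this choice,
\[
c\Delta\bigl(\mathbb E_{\nu_1}[N_i]+\tfrac12\bigr)\le c\Delta\,\frac{T}{|\mathcal X|-1}+\frac{c\Delta_0}{2}\le c+\frac{c\Delta_0}{2},
\]
which is a universal constant. Lemma~\ref{lemma: exp error} then gives
\[
\max\Bigl\{\mathbb P_{\nu_1}(\hat x\ne x_1),\ \sup_{\nu\in\mathcal V_i(\Delta)}\mathbb P_\nu(\hat x\ne x_i)\Bigr\}\ge \tfrac14 e^{-c(1+\Delta_0/2)}=:p_0>0.
\]

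Next we convert error probability into optimality gap. Under $\nu_1$, any recommendation $\hat x\ne x_1$ incurs a gap of exactly $\Delta$. Under $\nu\in\mathcal V_i(\Delta)$, any recommendation $\hat x\ne x_i$ incurs a gap of at least $y(x_i)-(1/4+\Delta)\ge\Delta$. Hence either $\mathcal E_T(\mathcal A_q,\nu_1)\ge p_0\Delta$, or we can pick $\nu\in\mathcal V_i(\Delta)$ with error probability at least $p_0/2$ (taking care with the supremum), giving $\mathcal E_T(\mathcal A_q,\nu)\ge p_0\Delta/2$. In both cases there is an instance in $\mathcal M$ whose expected gap is at least $\tfrac{p_0\Delta_0}{2}\,\tfrac{|\mathcal X|-1}{T}$. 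Since $|\mathcal X|-1\ge |\mathcal X|/2$ when $|\mathcal X|>1$, this is $\Omega(|\mathcal X|/T)$ with constants independent of the algorithm, so taking the infimum over $\mathcal A_q$ yields the statement about $\mathcal E^\star_T(\mathcal M)$.

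The step needing most care is the bookkeeping around $\mathbb E_{\nu_1}[N_i]$. The lemma measures exploration under $\nu_1$ only, so the pigeonhole argument must be done under $\nu_1$, and the bound $\sum_i N_i\le T$ must hold almost surely even though the query counts are random and adaptive. The budget assumption "at most $T$ oracle queries" provides this. The remaining points are routine: checking that $\mathcal V_i(\Delta)$ is nonempty and contained in $\mathcal M$ (for example $y(x_i)=1/4+2\Delta$ works because $2\Delta_0<\sqrt3/8$), and handling the supremum in the lemma by taking a near-maximizer.
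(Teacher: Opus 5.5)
Your proposal is correct and follows essentially the paper's argument: you pigeonhole the expected query counts under $\nu_1$ to find a suboptimal decision $x_i$ with $\mathbb E_{\nu_1}[N_i]\le T/(|\mathcal X|-1)$, choose $\Delta\asymp(|\mathcal X|-1)/T$ (capped at $\Delta_0$) so the exponent in Lemma~\ref{lemma: exp error} stays bounded, and convert the resulting constant error probability into an expected gap of order $\Delta$, either under $\nu_1$ or under a near-worst instance in $\mathcal V_i(\Delta)$. The only cosmetic fix is to set $\Delta$ before defining $\nu_1$ and doing the pigeonhole step; this is harmless because your choice of $\Delta$ does not depend on $i$.
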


We conclude this part by highlighting the technical novelty of Lemma \ref{lemma: exp error} and Theorem \ref{thm: quantum lower bound}. Lemma \ref{lemma: exp error} reduces discrete simulation optimization to a sequential quantum phase-testing problem and applies the polynomial method \citep{mande2026tight} to derive a lower bound on the testing error. The key idea is that a quantum algorithm cannot reliably detect a small change in the mean of a decision unless it allocates sufficiently many queries to that decision. Theorem \ref{thm: quantum lower bound} then converts this testing-error bound into a minimax lower bound on the expected optimality gap. For each suboptimal decision, we construct an alternative instance in which that decision becomes optimal. Since the algorithm has only a limited query budget, it cannot accurately distinguish the original instance $\nu_1$ from all of these alternatives with high accuracy. As a result, there must exist at least one instance on which the algorithm selects a suboptimal decision with sufficiently large probability, which leads to the desired lower bound on the expected optimality gap.

We next establish the corresponding minimax lower bound for classical algorithms. Theorem~\ref{thm: classical lower bound} shows that any classical simulation optimization algorithm using at most \(T\) simulation replications must incur a worst-case expected optimality gap of at least order \(\sqrt{|\mathcal X|/T}\). This result characterizes the fundamental classical performance limit over the instance class \(\mathcal M\).

\begin{theorem}
\label{thm: classical lower bound}
Suppose $|\mathcal X|>1$ and $T\ge |\mathcal X|-1$. Then, for any classical
discrete simulation optimization algorithm $\mathcal A_c$ using at most $T$
simulation replications, there exists an
instance $\nu\in\mathcal M$ such that
\[
    \mathcal E_T(\mathcal A_c,\nu)
    =
    \Omega\left(\sqrt{\frac{|\mathcal X|}{T}}\right).
\]
Consequently,
\[
    \inf_{\mathcal A_c}
    \sup_{\nu\in\mathcal M}
    \mathcal E_T(\mathcal A_c,\nu)
    =
    \Omega\left(\sqrt{\frac{|\mathcal X|}{T}}\right).
\]
\end{theorem}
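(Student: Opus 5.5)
The plan is to run the standard change-of-measure argument for simple regret \citep{lattimore2020bandit}, in parallel with the quantum construction above but with classical KL divergences. We use Bernoulli instances (or Gaussian ones with variance $\sigma^2$; both lie in $\mathcal M$ since $\sigma\ge 1/2$). The base instance is $\nu_1=(1/4+\Delta,1/4,\ldots,1/4)$. For each $i\ge 2$ the alternative $\nu^{(i)}$ agrees with $\nu_1$ except that $y(x_i)=1/4+2\Delta$, so $x_i$ is the unique optimum of $\nu^{(i)}$. Every decision other than $x_1$ has gap $\Delta$ under $\nu_1$, and every decision other than $x_i$ has gap at least $\Delta$ under $\nu^{(i)}$. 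Hence
\[
\mathcal E_T(\mathcal A_c,\nu_1)\ge \Delta\,\mathbb P_{\nu_1}(\hat x_T\ne x_1), \qquad \mathcal E_T(\mathcal A_c,\nu^{(i)})\ge \Delta\,\mathbb P_{\nu^{(i)}}(\hat x_T\ne x_i).
\]

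First, let $N_i$ be the number of replications of $x_i$ by time $T$. Since $\sum_{i\ge2}\mathbb E_{\nu_1}[N_i]\le T$, pigeonhole gives an index $i\ge2$ with $\mathbb E_{\nu_1}[N_i]\le T/(|\mathcal X|-1)$. Second, apply the divergence decomposition for adaptive sequential sampling. The two instances differ only in arm $i$, so
\[
\mathrm{KL}(\mathbb P_{\nu_1},\mathbb P_{\nu^{(i)}})=\mathbb E_{\nu_1}[N_i]\,\mathrm{KL}(\mathrm{Ber}(1/4+\Delta),\mathrm{Ber}(1/4+2\Delta))\le C\Delta^2\,\mathbb E_{\nu_1}[N_i].
\]
Here $C$ is a universal constant, obtained from the chi-square bound on Bernoulli KL with means in $[1/4,1/4+2\Delta_0]$. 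Third, the Bretagnolle--Huber inequality applied to the event $A=\{\hat x_T\ne x_1\}$ gives
\[
\mathbb P_{\nu_1}(A)+\mathbb P_{\nu^{(i)}}(A^c)\ge \tfrac12\exp\left(-C\Delta^2 T/(|\mathcal X|-1)\right).
\]
Note that $A^c=\{\hat x_T=x_1\}\subseteq\{\hat x_T\ne x_i\}$. So for one of the two instances, the probability of recommending a wrong decision is at least $\frac14\exp(-C\Delta^2T/(|\mathcal X|-1))$.

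Finally, choose $\Delta=\sqrt{(|\mathcal X|-1)/T}\cdot\min\{\Delta_0,1\}$. The assumption $T\ge|\mathcal X|-1$ ensures $\Delta\le\Delta_0$, which keeps all means in $[0,1]$ and the Bernoulli KL constant valid. The exponent is then bounded by a constant, so $\max\{\mathcal E_T(\mathcal A_c,\nu_1),\mathcal E_T(\mathcal A_c,\nu^{(i)})\}\ge c'\Delta$. Using $|\mathcal X|-1\ge|\mathcal X|/2$, this is $\Omega(\sqrt{|\mathcal X|/T})$. Taking the infimum over algorithms gives the minimax statement.

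The only delicate step is the divergence decomposition under adaptivity with a possibly random stopping time that is at most $T$. This is handled by the standard chain rule for KL over the canonical bandit model, in which the algorithm's internal randomization cancels. A secondary check is that the Bernoulli instances are $\sigma$-sub-Gaussian for $\sigma\ge1/2$, which follows from Hoeffding's lemma.
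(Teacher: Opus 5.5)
Your proposal is correct and follows essentially the same route as the paper's classical lower bound. It uses the Bernoulli base instance $\nu_1$ with single-coordinate alternatives making $x_i$ optimal, a pigeonhole choice of $i$ with $\mathbb E_{\nu_1}[N_i]\le T/(|\mathcal X|-1)$, and the classical change of measure (divergence decomposition plus Bretagnolle--Huber), which gives error at least of order $\exp(-c\Delta^2\,\mathbb E_{\nu_1}[N_i])$ in place of the quantum $\exp(-c\Delta\,\mathbb E_{\nu_1}[N_i])$, and then sets $\Delta\asymp\sqrt{|\mathcal X|/T}$. I could not see the paper's appendix proof; this match is inferred from the paper's parallel quantum construction and its stated change-of-measure approach. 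The details you track also check out: the $\chi^2$ bound with $q(1-q)\ge 3/16$, $\Delta\le\Delta_0$ from $T\ge|\mathcal X|-1$, $|\mathcal X|-1\ge|\mathcal X|/2$, and Hoeffding's lemma for membership in $\mathcal M$.
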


By comparing Theorem~\ref{thm: quantum lower bound} with Theorem~\ref{thm: classical lower bound}, we obtain the first indication of a quantum advantage in the dependence on the simulation budget \(T\). The quantum minimax lower bound scales as \(\Omega\left({|\mathcal X|}/{T}\right)\), while the classical minimax lower bound scales as \(\Omega(\sqrt{{|\mathcal X|}/{T}})\). 
The dependence on \(T\) improves from \(T^{-1/2}\) in the classical setting to \(T^{-1}\) in the quantum setting. This suggests a quadratic improvement in the convergence rate with respect to the simulation budget. In particular, when \(T\) is larger than \(|\mathcal X|\), the quantum lower-bound rate decreases more rapidly as \(T\) increases, which indicates the potential for quantum algorithms to achieve a smaller expected optimality gap under the same simulation budget.

An interesting, and somewhat counterintuitive, observation is that the quantum lower bound has a stronger dependence on the number of decisions \(|\mathcal X|\) than the classical lower bound. At first glance, this may seem to suggest that quantum algorithms are more sensitive to a large decision space. However, this dependence reflects how quantum algorithms accumulate statistical information.
Consider a least favorable problem instance in which the optimal decision is only slightly better than its closest competitor. Here, \(\Delta\) denotes the small mean-performance gap between the optimal decision and this nearest competing decision. Classically, distinguishing two Bernoulli means separated by a gap of this order requires on the order of \(1/\Delta^2\) samples. Equivalently, the testing error decays exponentially in \(\mathbb{E}[N]\Delta^2\), where \(N\) is the number of samples allocated to the relevant decision. To keep the error probability bounded away from zero, one must have \(\mathbb{E}[N]\Delta^2\) of constant order. Since the total budget \(T\) is spread over \(|\mathcal X|\) decisions, the least-sampled decision receives about \(T/|\mathcal X|\) samples. This leads to the gap scale \(\Delta\asymp \sqrt{|\mathcal X|/T}\).

In the quantum setting, the corresponding hypothesis-testing error decays exponentially in \(\mathbb{E}[N]\Delta\), where \(N\) is the number of quantum queries allocated to the relevant decision. Thus, to keep the error probability bounded away from zero, it is enough to have \(\mathbb E[N]\Delta\) of constant order. Equivalently, resolving a gap \(\Delta\) requires on the order of \(1/\Delta\) expected quantum queries. When the total query budget \(T\) is spread over \(|\mathcal X|\) decisions, the least-sampled decision receives about \(T/|\mathcal X|\) expected queries, which leads to the least favorable gap scale \(\Delta\asymp |\mathcal X|/T\). Quantum queries can resolve much smaller gaps under the same budget. Therefore, the worst-case problem instances for quantum algorithms place the mean performances of competing decisions closer together, with a gap of order \(|\mathcal X|/T\) rather than \(\sqrt{|\mathcal X|/T}\).

\section{Quantum Simulation Optimization Algorithm}
\label{sec: algorithm}
In this section, we propose a quantum algorithm for discrete simulation optimization. We then establish an upper bound on its minimax expected optimality gap over the instance class \(\mathcal M\). By comparing this upper bound with the minimax lower bound developed earlier, we show that the proposed algorithm attains the optimal convergence rate up to logarithmic factors. This demonstrates the minimax rate optimality of the quantum simulation optimization algorithm.

\subsection{Quantum Algorithm Design}

To design a quantum algorithm for discrete simulation optimization and achieve quantum acceleration, we build on existing quantum techniques. The following lemma provides a key theoretical guarantee for quantum Monte Carlo estimation \citep{kothari2023mean}. Since quantum Monte Carlo methods have already been implemented in software packages such as Qiskit \citep{javadi2024quantum}, we treat quantum mean estimation as a black-box subroutine. Our focus is not on its internal implementation, but on how to use its quantum speedup and theoretical guarantees to design a simulation optimization algorithm and establish its minimax rate optimality.

\begin{lemma}[Theorem 1.1 in \citealt{kothari2023mean}]
\label{lemma:qae}
Suppose Assumption~\ref{ass:performance} holds. For any decision
\(x\in\mathcal X\), assume access to a quantum simulation oracle for the
simulation output \(Y(x,\xi_x)\). Then there exists a quantum algorithm,
denoted by \(\textsc{QMC}(\tilde\epsilon,\tilde\delta)\), that uses
\[
    O\left(
        \frac{\sigma}{\tilde\epsilon}
        \log\frac{1}{\tilde\delta}
    \right)
\]
queries to the oracle and outputs an estimate \(\hat y(x)\) such that
\[
    \mathbb P\left(
        |\hat y(x)-y(x)|\le \tilde\epsilon
    \right)
    \ge
    1-\tilde\delta .
\]
\end{lemma}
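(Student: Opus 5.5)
Since this statement is imported from \citet{kothari2023mean}, the plan is to check that our setting meets the hypotheses of their Theorem~1.1 and then translate its guarantee into the form stated here. We do not re-derive the estimator. There are three steps. First, reduce Assumption~\ref{ass:performance} to the variance condition that theorem uses. Second, identify the quantum simulation oracle of Definition~\ref{def: quantum sim oracle} with their access model. Third, choose the accuracy parameter so that the query count comes out as $O((\sigma/\tilde\epsilon)\log(1/\tilde\delta))$.

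\textbf{Step 1 (variance bound).} Expand the moment generating function bound in Assumption~\ref{ass:performance} to second order in $\lambda$ around $\lambda=0$. Since $Y(x,\xi_x)-y(x)$ has mean zero, the inequality
\[
\mathbb E\bigl[e^{\lambda(Y-y(x))}\bigr]\le e^{\lambda^2\sigma^2/2}
\]
gives $1+\tfrac{\lambda^2}{2}\mathrm{Var}(Y)+o(\lambda^2)\le 1+\tfrac{\lambda^2\sigma^2}{2}+o(\lambda^2)$, and hence $\mathrm{Var}(Y(x,\xi_x))\le\sigma^2$. This is all that Theorem~1.1 of \citet{kothari2023mean} needs: it assumes only a finite variance $\sigma^2$, and no boundedness or knowledge of the range.

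\textbf{Step 2 (oracle model).} Kothari and O'Donnell work with a ``q-random variable''. This is a unitary that prepares $\sum_\omega \sqrt{p(\omega)}\ket{\omega}$ on a sample register, together with a classical (hence reversibly implementable) map $\omega\mapsto Y(\omega)$. The algorithm may call this unitary, its inverse, and controlled versions. Definition~\ref{def: quantum sim oracle} supplies exactly such a unitary $\mathcal Q_x$: the noise register $\ket{\xi_x}$ plays the role of $\omega$, and the output register holds $Y(x,\xi_x)$.

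\begin{itemize}
\item Because $\mathcal Q_x$ is unitary and is built from a reversible circuit, $\mathcal Q_x^\dagger$ and controlled-$\mathcal Q_x$ cost $O(1)$ oracle queries each.
\item We therefore count every query to either of these as one query, consistent with how the paper counts oracle queries.
\item If $Y$ is continuous, we use the discretized encoding from the remark after Definition~\ref{def: quantum sim oracle}. We assume the grid is fine enough that the discretization bias is negligible relative to $\tilde\epsilon$.
\end{itemize}

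\textbf{Step 3 (parameters).} In our notation, Theorem~1.1 gives an estimator that uses $O\bigl(n\log(1/\tilde\delta)\bigr)$ queries and satisfies $|\hat y-y(x)|\le \sigma/n$ with probability at least $1-\tilde\delta$. If the version we use only guarantees constant success probability, say $2/3$, we boost it by the median trick: run $O(\log(1/\tilde\delta))$ independent copies and take the median, with a Chernoff bound on the number of copies that fail. Setting $n=\lceil\sigma/\tilde\epsilon\rceil$ then gives the query bound $O\bigl((\sigma/\tilde\epsilon)\log(1/\tilde\delta)\bigr)$ together with the accuracy guarantee, which is the claim.

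We expect Step~2 to be the main obstacle. The difficulty is making the identification of the oracle models rigorous, in particular confirming three points:
\begin{itemize}
\item inverse and controlled oracle calls are admissible and cost $O(1)$ queries each;
\item the output register can hold the real-valued $Y$ to sufficient precision without extra queries;
\item no prior knowledge of $y(x)$ or $\sigma$ is needed beyond the known constant $\sigma$.
\end{itemize}
By comparison, Steps~1 and~3 are routine.
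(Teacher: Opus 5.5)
Your proposal is correct and matches the paper, which gives no separate proof and simply imports Theorem~1.1 of Kothari and O'Donnell. Your three steps make explicit the hypothesis check the paper leaves implicit: sub-Gaussianity gives $\mathrm{Var}(Y(x,\xi_x))\le\sigma^2$, you identify $\mathcal Q_x$ and $\mathcal Q_x^\dagger$ with their synthesizer, and you take $n=\lceil\sigma/\tilde\epsilon\rceil$ so that the true standard deviation divided by $n$ is at most $\sigma/n\le\tilde\epsilon$.

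One small point to add in Step~3: absorbing the ceiling into $O(\sigma/\tilde\epsilon)$ requires $\tilde\epsilon\le\sigma$. This holds in every call made by \textsc{QUEST}, since $\tilde\epsilon_r\le 1/4<1/2\le\sigma$. If instead $\tilde\epsilon>\sigma\ge 1/2$, you can output $1/2$ with no queries, because $y(x)\in[0,1]$.
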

\begin{remark}
Under Assumption~\ref{ass:performance}, classical Monte Carlo estimation based on the sample mean requires \(O(\sigma^2\tilde\epsilon^{-2}\log(1/\tilde\delta))\) simulation replications to achieve accuracy \(\tilde\epsilon\) with probability at least \(1-\tilde\delta\). In contrast, Lemma~\ref{lemma:qae} shows that quantum mean estimation requires only \(O(\sigma\tilde\epsilon^{-1}\log(1/\tilde\delta))\) oracle queries, which shows a quadratic improvement in the dependence on the accuracy parameter.
\end{remark}

However, designing a quantum algorithm for discrete simulation optimization and proving its optimality involve fundamental challenges. A naive approach that replaces classical estimators with quantum estimators and then analyzes the optimality gap using Lemma~\ref{lemma:qae} is not sufficient. The quantum speedup relies on preserving coherent superposition states, while intermediate measurements would collapse these states and destroy the coherence needed for acceleration. Therefore, unlike classical adaptive fixed-budget algorithms such as OCBA \citep{chen2000simulation}, a quantum algorithm cannot frequently observe simulation outputs and adjust the sampling allocation in the same way.
Instead, we need a mechanism that limits intermediate measurements and carefully controls when information is extracted from the quantum system. Moreover, to achieve minimax optimality, the algorithm must balance the estimation accuracy and failure probability of the quantum Monte Carlo subroutine, while using an appropriate elimination and stopping rule under a fixed query budget. These requirements make both the algorithm design and the theoretical analysis fundamentally different from those in classical simulation optimization.

To address these challenges, we propose \textsc{QUEST}, a \textbf{Qu}antum \textbf{E}fficient \textbf{S}imulation op\textbf{T}imization algorithm. The algorithm follows a fixed-budget sequential elimination framework. Starting from the full decision set, \textsc{QUEST} maintains an active set of candidate decisions and a decreasing threshold. In each iteration, it applies the quantum Monte Carlo subroutine to estimate the mean performance of every active decision at the same accuracy level, eliminates decisions whose estimated performance is sufficiently below the current best estimate, and then reduces the threshold by a factor of two. The procedure continues until the query budget is exhausted or only one candidate remains, after which any remaining active decision is returned as the recommended solution.

The design of \textsc{QUEST} is tailored to the quantum setting. Instead of repeatedly measuring individual simulation outputs and adaptively updating the query allocation after each observation, the algorithm queries the quantum simulation oracle in batches through the quantum Monte Carlo subroutine and extracts estimates only at the end of each iteration. This slow-measurement structure is compatible with the need to preserve quantum coherence during the estimation procedure. Moreover, the estimation accuracy \(\tilde\epsilon_r=\varepsilon_r/2\), the per-estimate failure probability \(\delta=1/T\), and the query allocation \(N_r\) in each iteration are carefully chosen to control both the estimation error and the probability of incorrect elimination. These choices are essential for proving that the final recommendation achieves a minimax optimal expected optimality gap.

\begin{algorithm}[t]
\caption{\textsc{QUEST}: \textbf{Qu}antum \textbf{E}fficient \textbf{S}imulation op\textbf{T}imization}
\label{alg:quest}
\begin{algorithmic}[1]
\Require Quantum query budget $T\ge2$, per-estimate failure probability $\delta=1/T$, decision set $\mathcal X$
\State Initialize $\mathcal B^{(1)}\gets \mathcal X$, $\varepsilon_1\gets 1/2$, $r\gets 1$, and $T_0\gets T$
\While{$T_{r-1}>0$ and $|\mathcal B^{(r)}|>1$}
    \State Set $\tilde\epsilon_r\gets \varepsilon_r/2$ \Comment{Target estimation accuracy}
    \State Set $N_r\gets \left\lceil C\tilde\epsilon_r^{-1}\log(1/\delta)\right\rceil$ \Comment{\(C\) depends only on $\sigma$}
    \If{$T_{r-1}<N_r|\mathcal B^{(r)}|$}
        \State \textbf{break}
    \EndIf
    \For{each decision $x\in\mathcal B^{(r)}$}
        \State Run $\textsc{QMC}(\tilde\epsilon_r,\delta)$ with $N_r$ queries to obtain $\hat y^{(r)}(x)$
    \EndFor
    \State Set $\hat y_{\max}^{(r)}\gets \max_{x\in\mathcal B^{(r)}}\hat y^{(r)}(x)$
    \State Update $\mathcal B^{(r+1)}\gets\left\{x\in\mathcal B^{(r)}:\hat y_{\max}^{(r)}-\hat y^{(r)}(x)\le \varepsilon_r\right\}$
    \State Update $T_r\gets T_{r-1}-N_r|\mathcal B^{(r)}|$
    \State Set $\varepsilon_{r+1}\gets \varepsilon_r/2$ and $r\gets r+1$
\EndWhile
\State \Return any decision $\hat x_T\in\mathcal B^{(r)}$
\end{algorithmic}
\end{algorithm}

\subsection{Rate-Optimality Analysis}
We now prove that \textsc{QUEST} attains the minimax rate up to logarithmic factors. For each iteration \(r\) and each active decision \(x\in\mathcal B^{(r)}\), let \(\hat y^{(r)}(x)\) denote the estimate of the mean performance \(y(x)\) returned by the quantum Monte Carlo subroutine in Algorithm~\ref{alg:quest}. The following lemma defines a high-probability good event under which all estimates produced throughout the algorithm are sufficiently accurate. It also shows that the number of completed iterations is at most logarithmic in the query budget \(T\), which enables us to control the total failure probability through a union bound over all iterations and active decisions.

\begin{lemma}
\label{lem:quest_good_event}
Let $R$ be the number of
completed iterations of Algorithm \ref{alg:quest}. Then $R\le \lceil \log_2(T+2)\rceil$. Moreover, the event
\[
    \mathcal G
    :=
    \left\{
    |\hat y^{(r)}(x)-y(x)|
    \le
    \frac{\varepsilon_r}{2},
    \forall r\le R, x\in\mathcal B^{(r)}
    \right\}
\]
satisfies
\[
    \mathbb P(\mathcal G^c)
    \le
    \frac{|\mathcal X|\left\lceil\log_2(T+2)\right\rceil}{T}.
\]
\end{lemma}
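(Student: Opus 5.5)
The plan has two independent parts: a deterministic bound on the number of iterations $R$, and a union bound for $\mathcal G^c$ built on Lemma~\ref{lemma:qae}.

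\emph{Bounding $R$.} In iteration $r$ the algorithm sets $\varepsilon_r=2^{-r}$, so $\tilde\epsilon_r=2^{-(r+1)}$. The per-decision allocation is therefore
\[
N_r=\bigl\lceil C\,2^{r+1}\log T\bigr\rceil .
\]
Iteration $r$ is completed only if the budget test passes, i.e., $N_r|\mathcal B^{(r)}|\le T_{r-1}\le T$. Since $|\mathcal B^{(r)}|\ge 2$ inside the loop, this forces $N_r\le T$. Choosing $C$ so that $C\log T\ge 1/2$ (enlarging $C$ if necessary; $C$ may depend on $\sigma$) gives $N_r\ge 2^{r}$. Hence $2^{r}\le T$ and $r\le\log_2 T$. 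If $C$ is allowed to be small, I would instead argue directly from the fact that the completed iterations consume a total budget
\[
\sum_{r\le R}N_r|\mathcal B^{(r)}|\ \ge\ \sum_{r\le R}2\,N_r\ \ge\ \sum_{r\le R}2^{r},
\]
which must not exceed $T$. This yields $2^{R+1}-2\le T$, i.e., $R\le\log_2(T+2)-1\le\lceil\log_2(T+2)\rceil$; this is the source of the ``$+2$''. The main obstacle is exactly this step: one needs the lower bound $N_r\gtrsim 2^{r}$ without an assumption on $C$. I expect the intended route is the cumulative-budget version, using $N_r\ge 1$ together with $\tilde\epsilon_r^{-1}$ doubling, or equivalently stipulating $C\log T\ge 1$ for $T\ge 2$.

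\emph{Bounding $\mathbb P(\mathcal G^c)$.} For a fixed $r$ and a fixed $x\in\mathcal B^{(r)}$, the call $\textsc{QMC}(\tilde\epsilon_r,\delta)$ uses $N_r$ queries. With $C$ chosen to match the constant in Lemma~\ref{lemma:qae}, this is sufficient query count, so
\[
\mathbb P\bigl(|\hat y^{(r)}(x)-y(x)|>\varepsilon_r/2\bigr)\le\delta=1/T .
\]
This holds conditionally on the history, because the active set $\mathcal B^{(r)}$ is determined by earlier measurements and each QMC call uses fresh oracle queries.

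To handle the random active sets and the random $R$, I index the failure events by pairs $(r,x)$ with $r\le\lceil\log_2(T+2)\rceil$ (a deterministic range, by the first part) and $x\in\mathcal X$. Each event is defined as ``$x$ is active in a completed round $r$ and its estimate fails''. Conditioning on the history up to round $r$ and applying the tower property bounds each such event's probability by $1/T$. The number of pairs is at most $|\mathcal X|\lceil\log_2(T+2)\rceil$, so a union bound gives
\[
\mathbb P(\mathcal G^c)\le\frac{|\mathcal X|\lceil\log_2(T+2)\rceil}{T}.
\]
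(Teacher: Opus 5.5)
Your proposal is correct and follows the route the statement points to; the paper's own proof is not reproduced in the text I have, so I am inferring its structure from the $T+2$ in the bound. The completed rounds must fit in the budget while $N_r$ grows like $2^{r}$, which gives $2^{R+1}-2\le T$. A union bound over at most $|\mathcal X|\lceil\log_2(T+2)\rceil$ QMC calls, each failing with probability at most $\delta=1/T$ conditionally on the past, then bounds $\mathbb P(\mathcal G^c)$.

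The one thing to drop is your ``small $C$'' fallback, together with the suggestion that $N_r\ge 1$ plus doubling would suffice. The inequality $2N_r\ge 2^{r}$ that the fallback uses already needs essentially $C\log T\ge 1/4$, and budget counting alone cannot avoid a lower bound on $C$. For example, with $C=10^{-4}$ and $T=20$ one has $N_r=1$ for all $r\le 10$, so two never-eliminated decisions sustain ten completed rounds, versus $\lceil\log_2 22\rceil=5$. So simply fix $C$ large enough. This is harmless, because extra queries never hurt the QMC guarantee and only change the constant in Lemma~\ref{lem:quest_gap_good}.
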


The next lemma shows that the elimination rule in \textsc{QUEST} is safe and effective on the good event \(\mathcal G\). Specifically, it guarantees that the true optimal decision \(x^\star\) is never removed from the active set. At the same time, after each completed iteration, all surviving decisions must be close to optimal, with an optimality gap controlled by the current threshold \(\varepsilon_r\). This property is the key step for converting the estimation accuracy of the quantum Monte Carlo subroutine into an upper bound on the final expected optimality gap.

\begin{lemma}
\label{lem:quest_elimination}
On the good event $\mathcal G$, the optimal decision $x^\star$ is never
eliminated. Moreover, after any completed iteration $r$, every surviving decision
$x\in\mathcal B^{(r+1)}$ satisfies
\[
    y(x^\star)-y(x)\le 2\varepsilon_r .
\]
\end{lemma}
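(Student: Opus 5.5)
The proof is a direct induction over the completed iterations, carried out entirely on the good event $\mathcal G$ of Lemma~\ref{lem:quest_good_event}. On $\mathcal G$, every estimate computed in a completed iteration $r$ satisfies $|\hat y^{(r)}(x)-y(x)|\le \varepsilon_r/2$ for all $x\in\mathcal B^{(r)}$.

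The first step is to show that $x^\star$ is never eliminated, by induction on $r$. The base case is $x^\star\in\mathcal B^{(1)}=\mathcal X$. Suppose $x^\star\in\mathcal B^{(r)}$ and iteration $r$ is completed. Let $\hat x^{(r)}$ be any maximizer of $\hat y^{(r)}$ over $\mathcal B^{(r)}$. Applying $\mathcal G$ twice and using the optimality of $x^\star$ gives
\[
\hat y^{(r)}_{\max}-\hat y^{(r)}(x^\star)
\le \Big(y(\hat x^{(r)})+\tfrac{\varepsilon_r}{2}\Big)-\Big(y(x^\star)-\tfrac{\varepsilon_r}{2}\Big)
\le \varepsilon_r .
\]
This is exactly the survival condition, so $x^\star\in\mathcal B^{(r+1)}$. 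Iterations that are aborted through the \textbf{break} statement perform no elimination, so they leave the active set unchanged.

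The second step bounds the gap of every survivor. Take any $x\in\mathcal B^{(r+1)}$. Since $x^\star\in\mathcal B^{(r)}$ by the first step, we have $\hat y^{(r)}_{\max}\ge \hat y^{(r)}(x^\star)\ge y(x^\star)-\varepsilon_r/2$. The survival rule gives $\hat y^{(r)}(x)\ge \hat y^{(r)}_{\max}-\varepsilon_r$, and $\mathcal G$ gives $y(x)\ge \hat y^{(r)}(x)-\varepsilon_r/2$. Chaining these three inequalities yields
\[
y(x)\ge y(x^\star)-\tfrac{\varepsilon_r}{2}-\varepsilon_r-\tfrac{\varepsilon_r}{2}=y(x^\star)-2\varepsilon_r ,
\]
which is the claimed bound.

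There is no substantive obstacle here. The only point needing care is that the argument uses $\mathcal G$ only for completed iterations and for active decisions. This is exactly how $\mathcal G$ is defined, since the estimates compared in iteration $r$ all belong to $\mathcal B^{(r)}$ with $r\le R$. If $x^\star$ is not unique, the same argument applies to any fixed maximizer.
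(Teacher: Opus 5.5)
Your proof is correct and follows essentially the same argument the paper relies on: on $\mathcal G$, two applications of the accuracy bound and the optimality of $x^\star$ give $\hat y^{(r)}_{\max}-\hat y^{(r)}(x^\star)\le\varepsilon_r$, so $x^\star$ survives by induction, and any survivor then satisfies $y(x^\star)-y(x)\le \varepsilon_r/2+\varepsilon_r+\varepsilon_r/2=2\varepsilon_r$. Your notes that iterations ending in \textbf{break} perform no elimination, and that $\mathcal G$ is invoked only for completed iterations and active decisions, are exactly the bookkeeping the argument needs.
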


Lemma \ref{lem:quest_gap_good} combines the iteration structure of \textsc{QUEST} with the elimination property established above. On the good event \(\mathcal G\), the optimal decision remains active throughout the algorithm, and all surviving decisions become progressively closer to optimal as the threshold decreases. Since each completed iteration consumes a prescribed number of quantum queries, the final threshold can be related to the total budget \(T\). This yields an upper bound on the optimality gap of the final recommendation on the good event, of order \((|\mathcal X|\log T)/T\).

\begin{lemma}
\label{lem:quest_gap_good}
On the good event $\mathcal G$, the decision $\hat x_T$ returned by
Algorithm~\ref{alg:quest} satisfies
\[
    y(x^\star)-y(\hat x_T)
    \le
    C^\prime\frac{|\mathcal X|\log T}{T},
\]
where $C^\prime>0$ is a universal constant.
\end{lemma}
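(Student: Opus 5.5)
On $\mathcal G$, Lemma~\ref{lem:quest_elimination} keeps $x^\star$ in every active set. So the returned $\hat x_T\in\mathcal B^{(R+1)}$ satisfies $y(x^\star)-y(\hat x_T)\le 2\varepsilon_R$ whenever $R\ge1$ iterations were completed. If the loop ends because $|\mathcal B^{(R+1)}|=1$, that single survivor is $x^\star$ and the gap is zero. The only interesting case is therefore termination by budget: the \textbf{break} at some round $R+1$, or $T_R\le 0$. The plan is to show that in this case $\varepsilon_R$ itself is $O(|\mathcal X|\log T/T)$.

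The key step is a budget accounting argument. Since $\tilde\epsilon_r=\varepsilon_r/2=2^{-r-1}$ and $\delta=1/T$, we have
\[
N_r=\left\lceil C\,2^{r+1}\log T\right\rceil\le C\,2^{r+2}\log T
\]
(for $T\ge 2$, after adjusting constants). The number of queries spent through round $R$ is at most
\[
\sum_{r\le R}N_r|\mathcal B^{(r)}|\le|\mathcal X|\sum_{r\le R}N_r\le C\,2^{R+3}|\mathcal X|\log T,
\]
a geometric sum dominated by its last term. Termination by budget at round $R+1$ means
\[
T_R<N_{R+1}|\mathcal B^{(R+1)}|\le|\mathcal X|\,C\,2^{R+3}\log T.
\]
Since $T=T_R+\sum_{r\le R}N_r|\mathcal B^{(r)}|$, it follows that $T\le C''\,2^{R}|\mathcal X|\log T$. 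Hence
\[
\varepsilon_R=2^{-R}\le C''\,\frac{|\mathcal X|\log T}{T},
\]
and the gap bound $2\varepsilon_R$ gives the claim with $C'=2C''$. The constant $C''$ depends on $C$, which in turn depends only on $\sigma$; treating $\sigma$ as fixed, it is a universal constant.

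The edge case $R=0$ is handled separately. Here the first round already fails the budget check, so $T<N_1|\mathcal X|\le 8C|\mathcal X|\log T$. Then $|\mathcal X|\log T/T$ is bounded below by a constant, and the trivial bound $y(x^\star)-y(\hat x_T)\le 1$, which holds because means lie in $[0,1]$, suffices after enlarging $C'$. The same trivial bound covers any case where $2\varepsilon_R\ge 1$.

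The main obstacle is bookkeeping rather than depth. One must consistently relate $T$ to the final round index, with rounding in $N_r$ and $\log(1/\delta)=\log T$, and include the queries of the aborted round in the comparison, since the inequality $T_R<N_{R+1}|\mathcal B^{(R+1)}|$ is what ties the leftover budget to $2^R$. I will also check that the loop condition $T_{r-1}>0$ never stops the algorithm without this inequality being available. If $T_R=0$ exactly, then the break inequality holds trivially at round $R+1$, so the same bound applies.
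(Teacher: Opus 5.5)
Your proposal is correct and follows essentially the same route the paper indicates. On $\mathcal G$, the elimination lemma keeps $x^\star$ active and bounds the gap of every survivor by $2\varepsilon_R$; the budget identity $T=T_R+\sum_{r\le R}N_r|\mathcal B^{(r)}|$, combined with the failed check $T_R<N_{R+1}|\mathcal B^{(R+1)}|$, then forces $2^{-R}=O(|\mathcal X|\log T/T)$. Your handling of the single-survivor, $T_R=0$, and $R=0$ cases (the last via the trivial bound $1$ together with $T<N_1|\mathcal X|$) is sound, and so is absorbing the ceiling in $N_r$ into the constant.
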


The preceding lemmas provide the key ingredients for bounding the performance of \textsc{QUEST}. On the good event \(\mathcal G\), the elimination rule preserves the optimal decision and ensures that every surviving decision has a small optimality gap. The final gap is therefore controlled by the last accuracy threshold reached before the query budget is exhausted. On the complement event \(\mathcal G^c\), the optimality gap is bounded by a constant because \(y(x)\in[0,1]\). Combining these two cases yields the following upper bound on the expected optimality gap of \textsc{QUEST}.

\begin{theorem}
\label{thm:quest_upper_bound}
For any problem instance $\nu\in\mathcal M$, let $\hat x_T$ be the decision
returned by Algorithm~\ref{alg:quest} after at most $T$ quantum queries. Then
\[
    \mathcal E_T(\textsc{QUEST},\nu)
    =
    O\left(
        \frac{|\mathcal X|\log T}{T}
    \right).
\]
\end{theorem}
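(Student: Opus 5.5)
The plan is to split the expectation in Definition~\ref{def: expected optgap} along the good event $\mathcal G$ from Lemma~\ref{lem:quest_good_event}. For any $\nu\in\mathcal M$ we write
\[
\mathcal E_T(\textsc{QUEST},\nu)
=\mathbb E\bigl[(y(x^\star)-y(\hat x_T))\mathbf 1_{\mathcal G}\bigr]
+\mathbb E\bigl[(y(x^\star)-y(\hat x_T))\mathbf 1_{\mathcal G^c}\bigr],
\]
and bound the two terms separately. Neither bound depends on the instance beyond Assumption~\ref{ass:performance}, so the final bound is uniform over $\mathcal M$.

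\textbf{Good event.} On $\mathcal G$, Lemma~\ref{lem:quest_gap_good} gives the deterministic bound $y(x^\star)-y(\hat x_T)\le C'|\mathcal X|\log T/T$. The first term is therefore at most $C'|\mathcal X|\log T/T$.

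\textbf{Bad event.} Since $y(x)\in[0,1]$ for every $x$ under Assumption~\ref{ass:performance}, the gap is always at most $1$. Lemma~\ref{lem:quest_good_event} then bounds the second term by
\[
\mathbb P(\mathcal G^c)\le \frac{|\mathcal X|\lceil\log_2(T+2)\rceil}{T}.
\]
For $T\ge2$, $\lceil\log_2(T+2)\rceil\le c_0\log T$ for a universal $c_0$, so this term is also $O(|\mathcal X|\log T/T)$. Adding the two pieces gives the claim, with constants depending only on $\sigma$ through $C$ and $C'$.

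\textbf{Main obstacle.} The substantive content sits in the cited lemmas, Lemma~\ref{lem:quest_gap_good} in particular. The one point to check in this step is that the choice $\delta=1/T$ makes the union-bound failure probability of the same order as the good-event gap, so that the constant worst-case loss on $\mathcal G^c$ does not dominate. That check is exactly the display above.

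There is also a possible subtlety if the algorithm stops by exhausting the budget before completing any iteration. In that case Lemma~\ref{lem:quest_gap_good} must still hold. If $|\mathcal X|\log T/T$ is not small (roughly $T\lesssim|\mathcal X|\log T$), the bound holds trivially because the gap is at most $1\le C'|\mathcal X|\log T/T$ for a suitable constant. I would note this edge case explicitly rather than rely on it silently.
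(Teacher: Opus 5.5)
Your proposal is correct and follows the paper's proof essentially verbatim: split the expectation along $\mathcal G$, bound the good-event term by Lemma~\ref{lem:quest_gap_good} and the bad-event term by $\mathbb P(\mathcal G^c)$ via Lemma~\ref{lem:quest_good_event}, then absorb the change of logarithm base into the constant. Your remark about the case where no iteration completes is sound, but it belongs to the proof of Lemma~\ref{lem:quest_gap_good} rather than to this step.
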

\noindent\textit{Proof of Theorem~\ref{thm:quest_upper_bound}.}
We now convert the high-probability guarantee on the good event into an
expected optimality gap bound. By Lemma~\ref{lem:quest_gap_good}, on the good
event $\mathcal G$, the returned decision satisfies
\[
    y(x^\star)-y(\hat x_T)
    \le
    C^\prime\frac{|\mathcal X|\log T}{T}.
\]
On the complement event $\mathcal G^c$, the estimates may fail, and the
algorithm may return a suboptimal decision. However, since $y(x)\in[0,1]$ for
all $x\in\mathcal X$, the optimality gap is always bounded by one:
\[
    y(x^\star)-y(\hat x_T)\le 1.
\]
Therefore, decomposing the expectation according to whether $\mathcal G$ occurs,
we obtain
\[
\begin{aligned}
    \mathcal E_T(\textsc{QUEST},\nu)
    &=
    \mathbb E\!\left[
        y(x^\star)-y(\hat x_T)
    \right] \\
    &=
    \mathbb E\!\left[
        \bigl(y(x^\star)-y(\hat x_T)\bigr)\mathbf 1_{\mathcal G}
    \right]
    +
    \mathbb E\!\left[
        \bigl(y(x^\star)-y(\hat x_T)\bigr)\mathbf 1_{\mathcal G^c}
    \right] \\
    &\le
    C^\prime\frac{|\mathcal X|\log T}{T}
    +
    \mathbb P(\mathcal G^c).
\end{aligned}
\]
The first term corresponds to the controlled gap on the good event, while the
second term accounts for the worst-case loss on the failure event. By
Lemma~\ref{lem:quest_good_event},
\[
    \mathbb P(\mathcal G^c)
    \le
    \frac{|\mathcal X|\left\lceil\log_2(T+2)\right\rceil}{T}.
\]
Combining the two bounds and absorbing constants, including the change of
logarithm base, into a universal constant gives
\[
    \mathcal E_T(\textsc{QUEST},\nu)
    =
    O\left(\frac{|\mathcal X|\log T}{T}\right).
\]
\qed
\begin{remark}
By Theorem~\ref{thm:quest_upper_bound}, \textsc{QUEST} achieves an expected optimality gap of order \(O(|\mathcal X|\log T/T)\). Comparing this upper bound with the quantum minimax lower bound \(\Omega(|\mathcal X|/T)\) in Theorem~\ref{thm: quantum lower bound}, we see that \textsc{QUEST} matches the minimax lower bound up to a logarithmic factor. Since \(\log T\) grows slowly relative to the polynomial dependence on \(T\), we regard \textsc{QUEST} as minimax rate optimal up to logarithmic factors, or equivalently,
\[
    \sup_{\nu\in\mathcal M}
    \mathcal E_T(\textsc{QUEST},\nu)
    =
    \widetilde O\left(\frac{|\mathcal X|}{T}\right).
\]
\end{remark}

\section{Numerical Experiments}
\label{sec: numerical}
In this section, we present numerical experiments to evaluate the practical performance of \textsc{QUEST} for discrete simulation optimization. We examine the expected optimality gap under varying simulation budgets, problem sizes, and performance distributions.

To enable a fair comparison and assess the potential quantum advantage, we introduce a classical counterpart of \textsc{QUEST}, denoted by \textsc{MC-EST}. In this algorithm, the quantum estimation component is replaced by standard Monte Carlo sampling, while the same slow-measurement framework is preserved.

We implement \textsc{QUEST} using Qiskit-based simulation. Throughout this section, all performance metrics are estimated from 100 independent replications, and the corresponding 95\% confidence intervals are reported.

In the first example, we consider a finite decision set with \(|\mathcal X|=20\). For each decision \(x\in\mathcal X\), the performance observation follows a Bernoulli distribution. The true mean performances range from 0.66 to 0.85 and are closely spaced, making the decision alternatives difficult to distinguish. This setting provides a nontrivial discrete simulation optimization problem with small optimality gaps among candidate decisions.

\begin{figure}
\centering
\includegraphics[width=0.8\linewidth]{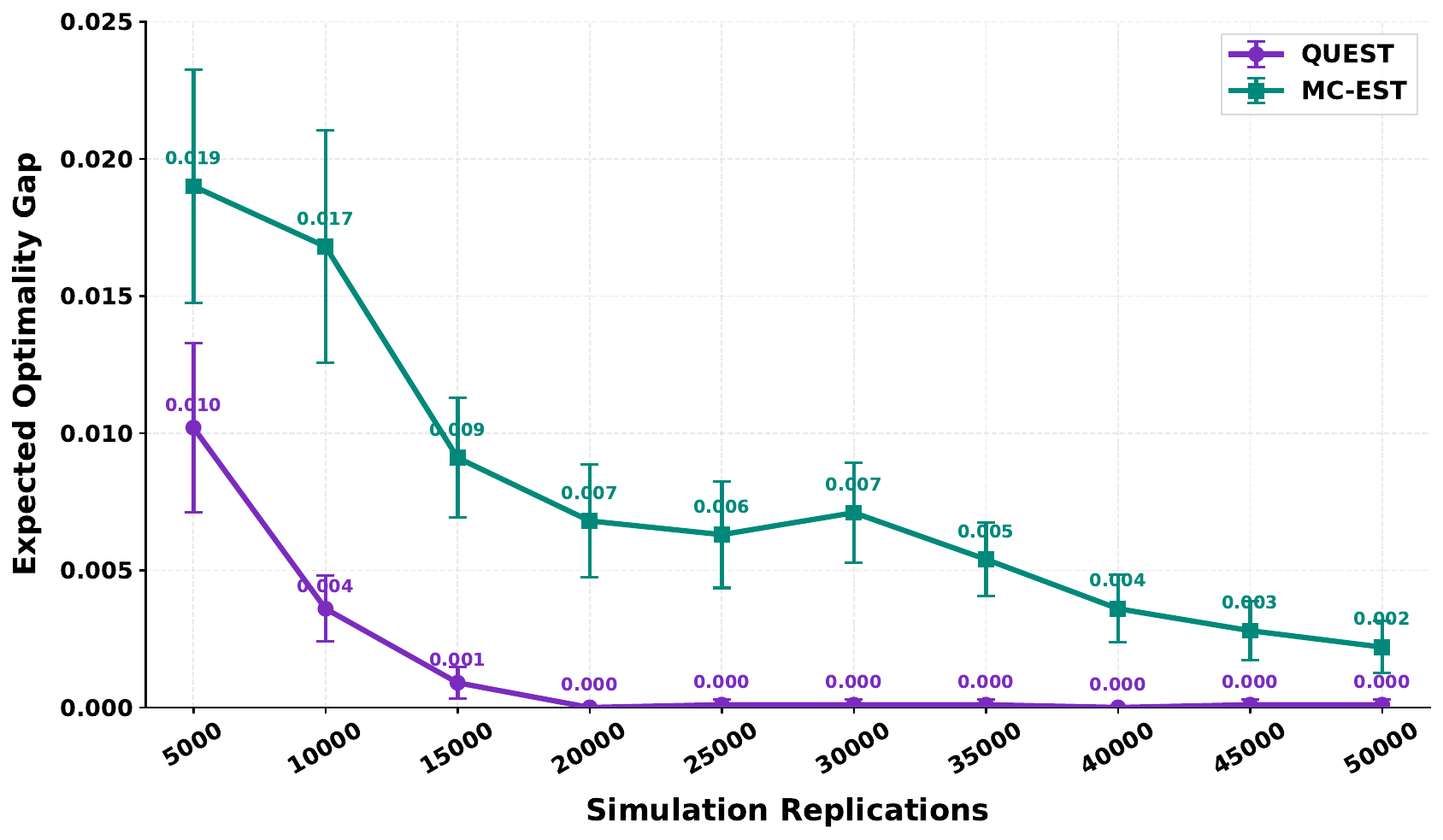}
\caption{Expected Optimality Gap under Bernoulli Distribution}
\label{fig:bernoulli_gap}
\end{figure}

Figure~\ref{fig:bernoulli_gap} reports the expected optimality gap under the Bernoulli performance setting as the simulation budget \(T\) increases. Both \textsc{QUEST} and \textsc{MC-EST} reduce the expected optimality gap as more simulation replications are used, but \textsc{QUEST} exhibits a much faster decrease. The expected optimality gap of \textsc{QUEST} drops sharply from about \(0.010\) at \(T=5000\) to a negligible level around \(T=20000\). In contrast, \textsc{MC-EST} decreases more gradually, from about \(0.019\) at \(T=5000\) to about \(0.002\) at \(T=50000\). These results indicate that, under the same simulation budget, \textsc{QUEST} identifies higher-quality decisions more quickly than its classical Monte Carlo counterpart, highlighting its advantage in convergence speed.

We next evaluate the performance of \textsc{QUEST} and \textsc{MC-EST} under varying problem sizes. In this experiment, the number of decision alternatives is set to \(|\mathcal X|\in\{10,20,30\}\), while the total simulation budget is fixed at \(T=10000\) for each independent run.
We construct the true means so that instances with larger \(|\mathcal X|\) contain more densely spaced near-optimal competitors. Performance observations are generated from Bernoulli distributions with these prescribed means. This setup is intentionally nontrivial: as \(|\mathcal X|\) increases, the identification task becomes harder because more alternatives have similar true performance, making incorrect final selection more likely under a finite budget. For each setting, we report the final expected optimality gap to measure the quality of the final recommendation.

\begin{figure}
\centering
\includegraphics[width=1.0\linewidth]{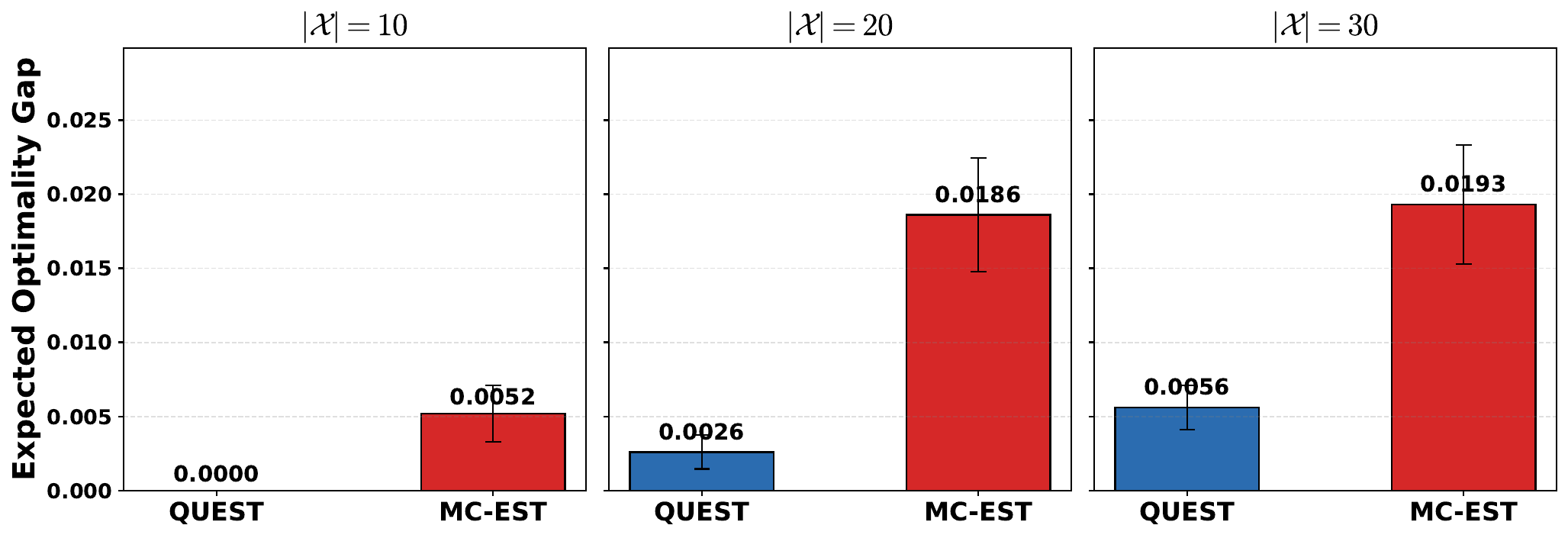}
\caption{Expected Optimality Gap under Different Problem Sizes ($T=10000$)}
\label{fig:problem sizes}
\end{figure}

Figure~\ref{fig:problem sizes} compares the final expected optimality gaps of \textsc{QUEST} and \textsc{MC-EST} under different problem sizes. Across all settings, \textsc{QUEST} achieves a substantially smaller expected optimality gap than \textsc{MC-EST}. When \(|\mathcal X|=10\), the gap of \textsc{QUEST} is essentially zero, while \textsc{MC-EST} has a positive gap of about \(0.0052\). As the problem size increases, the identification task becomes more challenging, but \textsc{QUEST} continues to maintain small gaps of about \(0.0026\) and \(0.0056\) for \(|\mathcal X|=20\) and \(|\mathcal X|=30\), respectively. In contrast, \textsc{MC-EST} exhibits much larger gaps of about \(0.0186\) and \(0.0193\). These results show that, under the same simulation budget, \textsc{QUEST} consistently identifies higher-quality decisions across different problem sizes, which provides empirical evidence of the quantum advantage in discrete simulation optimization.

We further compare the performance of \textsc{QUEST} and \textsc{MC-EST} under different performance distributions. In this experiment, we fix the number of decisions at \(|\mathcal X|=20\) and evaluate both methods under two fixed simulation budgets, \(T=10000\) and \(T=20000\).
We consider three observation models: Gaussian, Student-\(t\), and Beta. The Gaussian and Student-\(t\) models represent continuous random performance, with the Student-\(t\) distribution introducing heavier tails. The Beta model provides a bounded \([0,1]\) setting with controllable variability. For \textsc{QUEST}, continuous outcomes are mapped to a finite quantum support through a uniform grid approximation with \(q=4\) grid qubits, corresponding to \(2^q=16\) support points, when constructing the quantum oracle. This experiment allows us to compare decision quality and assess the robustness of both methods across different performance distributions.

\begin{figure}
\centering
\includegraphics[width=1.0\linewidth]{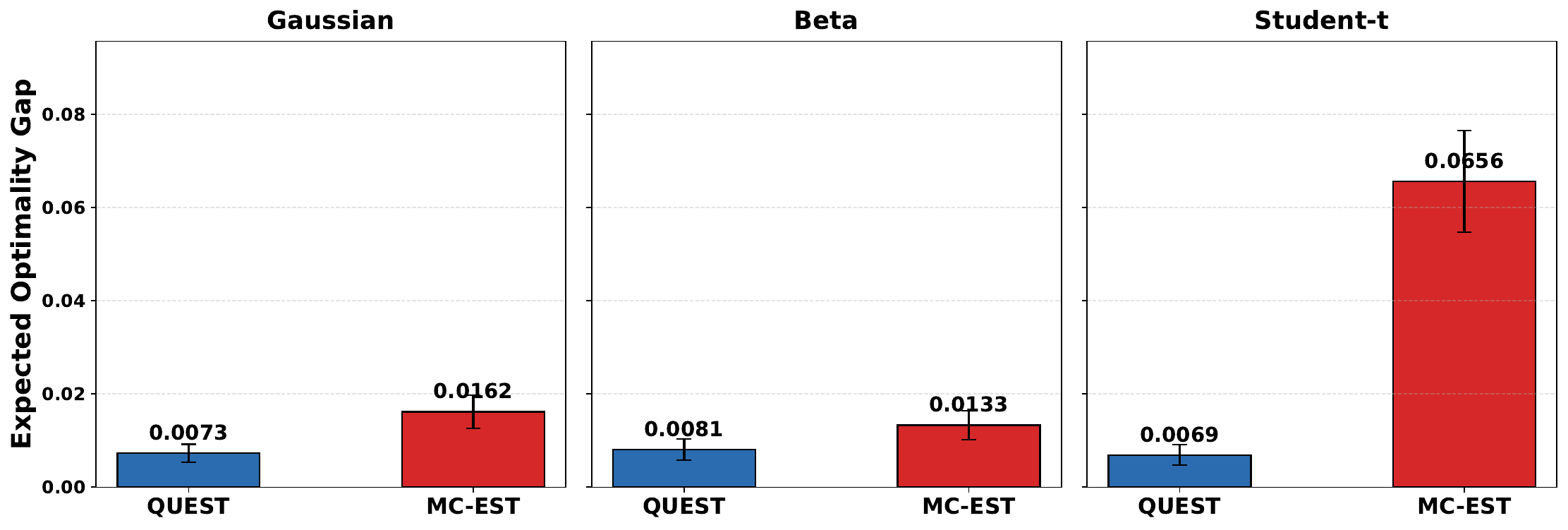}
\caption{Expected Optimality Gap under Different Performance Distributions ($T=10000$)}
\label{fig:distribution}
\end{figure}

\begin{figure}
\centering
\includegraphics[width=1.0\linewidth]{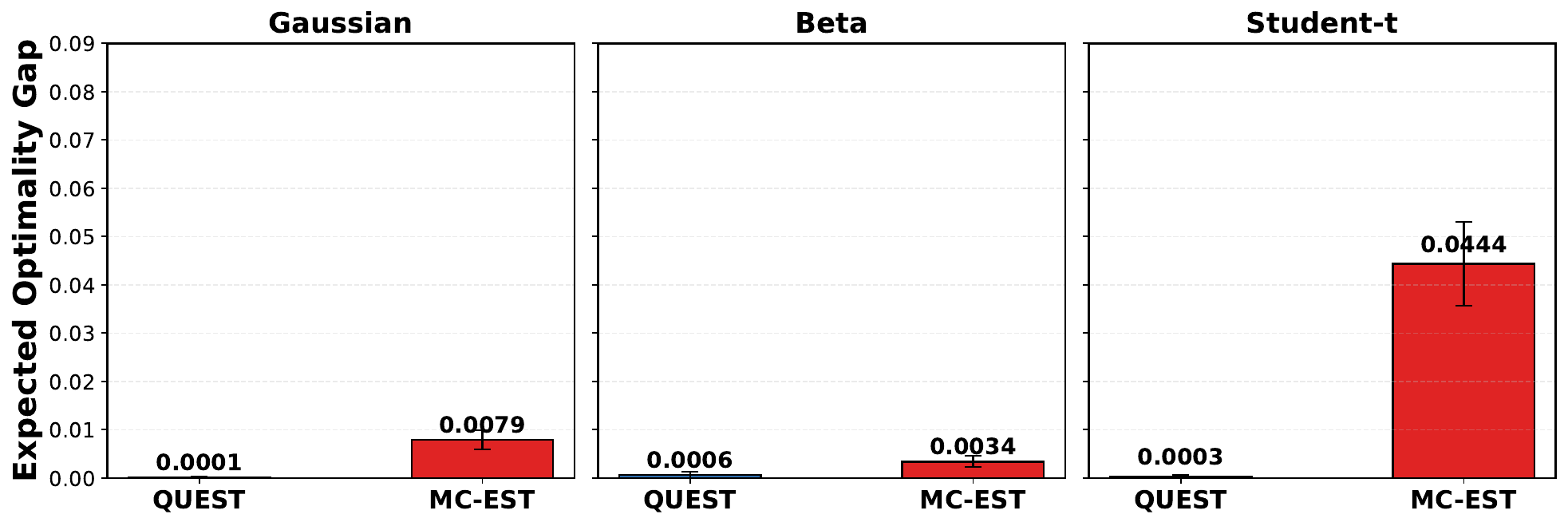}
\caption{Expected Optimality Gap under Different Performance Distributions ($T=20000$)}
\label{fig:distribution2}
\end{figure}

Figures~\ref{fig:distribution} and~\ref{fig:distribution2} compare the final expected optimality gaps of \textsc{QUEST} and \textsc{MC-EST} under Gaussian, Beta, and Student-\(t\) performance distributions. When \(T=10000\), \textsc{QUEST} achieves smaller expected optimality gaps across all three distributions, with gaps around \(0.007\)--\(0.008\). In contrast, \textsc{MC-EST} has larger gaps, especially under the Student-\(t\) distribution, where the gap reaches about \(0.0656\). When the budget increases to \(T=20000\), the advantage of \textsc{QUEST} becomes more pronounced. Its expected optimality gaps are nearly zero under all three distributions, while \textsc{MC-EST} still exhibits positive gaps, particularly under the heavy-tailed Student-\(t\) setting. These results show that \textsc{QUEST} consistently identifies higher-quality decisions than its classical Monte Carlo counterpart under the same simulation budget. The strong performance across Gaussian, Beta, and Student-\(t\) observations provides empirical evidence of the quantum advantage and suggests that \textsc{QUEST} remains robust across different performance distributions, including settings beyond the light-tailed assumptions used in the theoretical analysis.

\section{Conclusion}
\label{sec: conclusion}
This paper studies how quantum computing can accelerate the convergence rate of discrete simulation optimization. We establish a minimax lower bound on the expected optimality gap for quantum algorithms and show the resulting quantum advantage by comparison with the classical lower bound. We further propose \textsc{QUEST}, a quantum discrete simulation optimization algorithm, and prove that it is minimax rate optimal up to logarithmic factors. Numerical experiments support the effectiveness of the proposed algorithm. Future directions include extending the framework to continuous simulation optimization and more complex settings, such as nested simulation optimization and risk-measure-based objectives.

Overall, this paper develops a theoretical framework for analyzing quantum algorithms for discrete simulation optimization. Its problem formulation, performance metrics, and analytical tools can potentially be extended to other fixed-budget simulation optimization settings. More broadly, this work opens a new direction for using quantum computing to accelerate simulation optimization.


%
%
%



\bibliographystyle{informs2014} 
\bibliography{sample} 

@book{lattimore2020bandit,
  title={Bandit algorithms},
  author={Lattimore, Tor and Szepesv{\'a}ri, Csaba},
  year={2020},
  publisher={Cambridge University Press}
}

@article{chen2000simulation,
  title={Simulation budget allocation for further enhancing the efficiency of ordinal optimization},
  author={Chen, Chun-Hung and Lin, Jianwu and Y{\"u}cesan, Enver and Chick, Stephen E},
  journal={Discrete Event Dynamic Systems},
  volume={10},
  pages={251--270},
  year={2000},
  publisher={Springer}
}

@article{shen2021ranking,
  title={Ranking and selection with covariates for personalized decision making},
  author={Shen, Haihui and Hong, L Jeff and Zhang, Xiaowei},
  journal={INFORMS Journal on Computing},
  volume={33},
  number={4},
  pages={1500--1519},
  year={2021},
  publisher={INFORMS}
}

@article{kim2001fully,
  title={A fully sequential procedure for indifference-zone selection in simulation},
  author={Kim, Seong-Hee and Nelson, Barry L},
  journal={ACM Transactions on Modeling and Computer Simulation (TOMACS)},
  volume={11},
  number={3},
  pages={251--273},
  year={2001},
  publisher={ACM New York, NY, USA}
}

@book{nielsen2010quantum,
  title={Quantum computation and quantum information},
  author={Nielsen, Michael A and Chuang, Isaac L},
  year={2010},
  publisher={Cambridge university press}
}

@inproceedings{wang2025quantum,
  title={Quantum Best Arm Identification with Quantum Oracles},
  author={Wang, Xuchuang and Chen, Yu-Zhen Janice and de Andrade, Matheus Guedes and Allcock, Jonathan and Hajiesmaili, Mohammad and Lui, John CS and Towsley, Don},
  booktitle={Proceedings of the AAAI Conference on Artificial Intelligence},
  volume={39},
  number={20},
  pages={21321--21329},
  year={2025}
}

@article{montanaro2015quantum,
  title={Quantum speedup of Monte Carlo methods},
  author={Montanaro, Ashley},
  journal={Proceedings of the Royal Society A: Mathematical, Physical and Engineering Sciences},
  volume={471},
  number={2181},
  pages={20150301},
  year={2015},
  publisher={The Royal Society}
}

@article{esa2024single,
  title={SINGLE SERVER BATCH SERVICE MARKOVIAN (M/M (a, b)/1) QUEUE WITH VARIANT WORKING VACATION},
  author={ESA, SHERIF ABRAHIM},
  year={2024}
}

@article{dunjko2016quantum,
  title={Quantum-enhanced machine learning},
  author={Dunjko, Vedran and Taylor, Jacob M and Briegel, Hans J},
  journal={Physical review letters},
  volume={117},
  number={13},
  pages={130501},
  year={2016},
  publisher={APS}
}

@article{lee2012approximate,
  title={Approximate simulation budget allocation for selecting the best design in the presence of stochastic constraints},
  author={Lee, Loo Hay and Pujowidianto, Nugroho Artadi and Li, Ling-Wei and Chen, Chun-Hung and Yap, Chee Meng},
  journal={IEEE Transactions on Automatic Control},
  volume={57},
  number={11},
  pages={2940--2945},
  year={2012},
  publisher={IEEE}
}

@article{hu2024multi,
  title={Multi-Task Best Arm Identification with Risk Constraints},
  author={Hu, Mingjie and Hu, Jianqiang},
  journal={Available at SSRN 5214504},
  year={2024}
}

@article{casale2020quantum,
  title={Quantum bandits},
  author={Casal{\'e}, Balthazar and Di Molfetta, Giuseppe and Kadri, Hachem and Ralaivola, Liva},
  journal={Quantum Machine Intelligence},
  volume={2},
  number={1},
  pages={11},
  year={2020},
  publisher={Springer}
}

@inproceedings{wang2021quantum,
  title={Quantum exploration algorithms for multi-armed bandits},
  author={Wang, Daochen and You, Xuchen and Li, Tongyang and Childs, Andrew M},
  booktitle={Proceedings of the AAAI Conference on Artificial Intelligence},
  volume={35},
  number={11},
  pages={10102--10110},
  year={2021}
}

@article{meyer2022survey,
  title={A survey on quantum reinforcement learning},
  author={Meyer, Nico and Ufrecht, Christian and Periyasamy, Maniraman and Scherer, Daniel D and Plinge, Axel and Mutschler, Christopher},
  journal={arXiv preprint arXiv:2211.03464},
  year={2022}
}

@article{hu2025optimal,
  title={Optimal computation budget allocation with Gaussian process regression},
  author={Hu, Mingjie and Xu, Jie and Chen, Chun-Hung and Hu, Jian-Qiang},
  journal={European Journal of Operational Research},
  volume={322},
  number={1},
  pages={147--156},
  year={2025},
  publisher={Elsevier}
}

@article{wu2024data,
  title={Data-driven ranking and selection under input uncertainty},
  author={Wu, Di and Wang, Yuhao and Zhou, Enlu},
  journal={Operations Research},
  volume={72},
  number={2},
  pages={781--795},
  year={2024},
  publisher={INFORMS}
}

@article{abbas2024challenges,
  title={Challenges and opportunities in quantum optimization},
  author={Abbas, Amira and Ambainis, Andris and Augustino, Brandon and B{\"a}rtschi, Andreas and Buhrman, Harry and Coffrin, Carleton and Cortiana, Giorgio and Dunjko, Vedran and Egger, Daniel J and Elmegreen, Bruce G and others},
  journal={Nature Reviews Physics},
  volume={6},
  number={12},
  pages={718--735},
  year={2024},
  publisher={Nature Publishing Group}
}

@article{hu2025quantum,
  title={Quantum Ranking and Selection under Stochastic Constraints},
  author={Hu, Mingjie and Hu, Jianqiang and Wang, Tianxiang},
  note={Available at SSRN 5948136},
  year={2025}
}

@article{javadi2024quantum,
  title={Quantum computing with Qiskit},
  author={Javadi-Abhari, Ali and Treinish, Matthew and Krsulich, Kevin and Wood, Christopher J and Lishman, Jake and Gacon, Julien and Martiel, Simon and Nation, Paul D and Bishop, Lev S and Cross, Andrew W and others},
  journal={arXiv preprint arXiv:2405.08810},
  year={2024}
}

@inproceedings{kothari2023mean,
  title={Mean estimation when you have the source code; or, quantum Monte Carlo methods},
  author={Kothari, Robin and O'Donnell, Ryan},
  booktitle={Proceedings of the 2023 Annual ACM-SIAM Symposium on Discrete Algorithms (SODA)},
  pages={1186--1215},
  year={2023},
  organization={SIAM}
}

@article{blanchet2025quantum,
  title={Quantum speedup of non-linear Monte Carlo problems},
  author={Blanchet, Jose and Hamoudi, Yassine and Szegedy, Mario and Wang, Guanyang},
  journal={arXiv preprint arXiv:2502.05094},
  year={2025}
}

@article{dai2023quantum,
  title={Quantum bayesian optimization},
  author={Dai, Zhongxiang and Lau, Gregory Kang Ruey and Verma, Arun and Shu, Yao and Low, Bryan Kian Hsiang and Jaillet, Patrick},
  journal={Advances in neural information processing systems},
  volume={36},
  pages={20179--20207},
  year={2023}
}

@inproceedings{wang2021quantum2,
  title={Quantum algorithms for reinforcement learning with a generative model},
  author={Wang, Daochen and Sundaram, Aarthi and Kothari, Robin and Kapoor, Ashish and Roetteler, Martin},
  booktitle={International Conference on Machine Learning},
  pages={10916--10926},
  year={2021},
  organization={PMLR}
}

@inproceedings{zabinsky2025comprehensive,
  title={A Comprehensive Approach to Optimizing Complex Systems},
  author={Zabinsky, Zelda B and Maneekul, Pariyakorn},
  booktitle={2025 Winter Simulation Conference (WSC)},
  pages={43--57},
  year={2025},
  organization={IEEE}
}

@inproceedings{gacon2020quantum,
  title={Quantum-enhanced simulation-based optimization},
  author={Gacon, Julien and Zoufal, Christa and Woerner, Stefan},
  booktitle={2020 IEEE International conference on quantum computing and engineering (QCE)},
  pages={47--55},
  year={2020},
  organization={IEEE}
}

@article{bulger2003implementing,
  title={Implementing pure adaptive search with Grover's quantum algorithm},
  author={Bulger, David and Baritompa, William P and Wood, Graham R},
  journal={Journal of optimization theory and applications},
  volume={116},
  number={3},
  pages={517--529},
  year={2003},
  publisher={Springer}
}

@article{baritompa2005grover,
  title={Grover's quantum algorithm applied to global optimization},
  author={Baritompa, William P and Bulger, David W and Wood, Graham R},
  journal={SIAM Journal on Optimization},
  volume={15},
  number={4},
  pages={1170--1184},
  year={2005},
  publisher={SIAM}
}

@article{gilliam2021grover,
  title={Grover adaptive search for constrained polynomial binary optimization},
  author={Gilliam, Austin and Woerner, Stefan and Gonciulea, Constantin},
  journal={Quantum},
  volume={5},
  pages={428},
  year={2021},
  publisher={Verein zur F{\"o}rderung des Open Access Publizierens in den Quantenwissenschaften}
}

@article{carrera2021efficient,
  title={Efficient state preparation for quantum amplitude estimation},
  author={Carrera Vazquez, Almudena and Woerner, Stefan},
  journal={Physical Review Applied},
  volume={15},
  number={3},
  pages={034027},
  year={2021},
  publisher={APS}
}

@article{gao2017new,
  title={A new budget allocation framework for the expected opportunity cost},
  author={Gao, Siyang and Chen, Weiwei and Shi, Leyuan},
  journal={Operations Research},
  volume={65},
  number={3},
  pages={787--803},
  year={2017},
  publisher={Informs}
}

@article{wang2025optimal,
  title={Optimal computing budget allocation for data-driven ranking and selection},
  author={Wang, Yuhao and Zhou, Enlu},
  journal={INFORMS Journal on Optimization},
  volume={7},
  number={1},
  pages={1--19},
  year={2025},
  publisher={INFORMS}
}

@article{hu2026quantum,
  title={Quantum Grover Adaptive Search for Discrete Simulation Optimization},
  author={Hu, Mingjie and Hu, Jian-qiang and Zhou, Enlu},
  journal={arXiv preprint arXiv:2604.26277},
  year={2026}
}

@inproceedings{wan2023quantum,
  title={Quantum multi-armed bandits and stochastic linear bandits enjoy logarithmic regrets},
  author={Wan, Zongqi and Zhang, Zhijie and Li, Tongyang and Zhang, Jialin and Sun, Xiaoming},
  booktitle={Proceedings of the AAAI Conference on Artificial Intelligence},
  volume={37},
  number={8},
  pages={10087--10094},
  year={2023}
}

@inproceedings{grover1996fast,
  title={A fast quantum mechanical algorithm for database search},
  author={Grover, Lov K},
  booktitle={Proceedings of the twenty-eighth annual ACM symposium on Theory of computing},
  pages={212--219},
  year={1996}
}

@article{durr1996quantum,
  title={A quantum algorithm for finding the minimum},
  author={Durr, Christoph and Hoyer, Peter},
  journal={arXiv preprint quant-ph/9607014},
  year={1996}
}

@article{mande2026tight,
  title={Tight bounds for quantum phase estimation and related problems},
  author={Mande, Nikhil S and de Wolf, Ronald},
  journal={Quantum},
  volume={10},
  pages={2140},
  year={2026},
  publisher={Verein zur F{\"o}rderung des Open Access Publizierens in den Quantenwissenschaften}
}





  



\end{document}